\def \VersionSpringer {}
	\let\VersionWithComments\undefined
	\let\WithReply\undefined
\def\cl@chapter{\@elt {theorem}}
\theoremstyle{plain}
\definecolor{darkblue}{rgb}{0.0,0.0,0.6}
\definecolor{darkgreen}{rgb}{0, 0.5, 0}
\definecolor{darkpurple}{rgb}{0.7, 0, 0.7}
\definecolor{violetcurie}{RGB}{115,26,67}
\definecolor{forestgreen}{rgb}{0.13,0.54,0.13}
\definecolor{darkblue}{rgb}{0, 0, 0.7}
\crefname{line}{\text{line}}{\text{lines}} %
\tikzstyle{every node}=[initial text=]
\tikzstyle{location}=[rectangle, rounded corners, minimum size=12pt, draw=black, fill=blue!10, inner sep=2pt]
\tikzstyle{invariant}=[draw=black, dotted, inner sep=1pt] %
\definecolor{loccolor1}{rgb}{1, 0.3, 0.3}
\definecolor{loccolor2}{rgb}{0.3, 1, 0.3}
\definecolor{loccolor3}{rgb}{0.3, 0.3, 1}
\definecolor{loccolor4}{rgb}{1, 0.3, 1}
\definecolor{loccolor5}{rgb}{1, 1, 0.3}
\definecolor{loccolor6}{rgb}{0.3, 1, 1}
\definecolor{loccolor7}{rgb}{0.9, 0.6, 0.2}
\definecolor{loccolor8}{rgb}{0.7, 0.4, 1}
\definecolor{loccolor9}{rgb}{0.5, 1, 0.75}
\definecolor{loccolor10}{rgb}{0.8, 0.7, 0.6}
\definecolor{loccolor11}{rgb}{0.6, 0.7, 0.8}
\definecolor{loccolor12}{rgb}{0.2, 0.5, 0.9}
\definecolor{loccolor13}{rgb}{0.5, 0.9, 0.2}
\definecolor{loccolor14}{rgb}{0.9, 0.2, 0.5}
\definecolor{loccolor15}{rgb}{0.7, 0.7, 0.7}
\definecolor{loccolor16}{rgb}{0.8, 0.8, 0.5}
\newcommand{\styleact}[1]{\ensuremath{\textcolor{coloract}{\mathrm{#1}}}}
\newcommand{\styleclock}[1]{\ensuremath{\textcolor{colorclock}{#1}}}
\newcommand{\styleloc}[1]{\ensuremath{\mathrm{#1}}}
\newcommand{\styleparam}[1]{\ensuremath{\textcolor{colorparam}{#1}}}
	\definecolor{coloract}{rgb}{0., 0., 0.}
	\definecolor{colorclock}{rgb}{0., 0., 0.}
	\definecolor{colordisc}{rgb}{0., 0., 0.}
	\definecolor{colorloc}{rgb}{0., 0., 0.}
	\definecolor{colorparam}{rgb}{0., 0., 0.}
	\newcommand{\cellHeader}[1]{\cellcolor{blue!20}\textbf{#1}}
	\newcommand{\couleurDec}{green!50}
	\newcommand{\couleurUndec}{red!50}
	\newcommand{\couleurOpen}{yellow!50}
	\newcommand{\couleurOpenRef}{black!5}
	\newcommand{\couleurN}{pink!50}
	\newcommand{\couleurNbounded}{pink!50}
	\newcommand{\couleurQ}{purple!50}
	\newcommand{\couleurQR}{purple!50}
	\newcommand{\couleurR}{purple!50}
	\newcommand{\couleurSyntaxInt}{blue!25}
	\newcommand{\couleurSyntaxIneqLarge}{blue!35}
	\newcommand{\couleurSyntaxJLR}{blue!20}
	\definecolor{coloract}{rgb}{0.50, 0.70, 0.30}
	\definecolor{colorclock}{rgb}{0.4, 0.4, 1}
	\definecolor{colordisc}{rgb}{1, 0, 1}
	\definecolor{colorloc}{rgb}{0.4, 0.4, 0.65}
	\definecolor{colorparam}{rgb}{1, 0.6, 0.0}
	\newcommand{\cellHeader}[1]{\cellcolor{blue!40}\textbf{#1}}
	\newcommand{\couleurDec}{green}
	\newcommand{\couleurUndec}{red}
	\newcommand{\couleurOpen}{yellow}
	\newcommand{\couleurOpenRef}{black!10}
	\newcommand{\couleurN}{pink}
	\newcommand{\couleurNbounded}{pink}
	\newcommand{\couleurQ}{purple}
	\newcommand{\couleurQR}{purple}
	\newcommand{\couleurR}{purple}
	\newcommand{\couleurSyntaxInt}{blue!50}
	\newcommand{\couleurSyntaxIneqLarge}{blue!70}
	\newcommand{\couleurSyntaxJLR}{blue!40}
\newcommand{\cellDec}{\cellcolor{\couleurDec}}
\newcommand{\cellUndec}{\cellcolor{\couleurUndec}}
\newcommand{\cellOpen}{\cellcolor{\couleurOpen}open}
\newcommand{\cellOpenref}{\cellcolor{\couleurOpenRef}}
\newcommand{\cellN}{\cellcolor{\couleurN}\grandn{}}
\newcommand{\cellNbounded}{\cellcolor{\couleurNbounded}$\grandn$ bounded}
\newcommand{\cellQ}{\cellcolor{\couleurQ}\grandqplus{}}
\newcommand{\cellQboundedonetwo}{\cellcolor{\couleurQ}$\grandqplus_{[1;2]}$}
\newcommand{\cellQR}{\cellcolor{\couleurQR}\grandqplus{}/\grandrplus{}}
\newcommand{\cellR}{\cellcolor{\couleurR}\grandrplus{}}
\newcommand{\cellSyntaxEq}{\cellcolor{\couleurSyntaxInt}$\clock = \param|\constant$}
\newcommand{\cellSyntaxComp}{\cellcolor{\couleurSyntaxInt}$\clock \compOp \param|\constant$}
\newcommand{\cellSyntaxOpen}{\cellcolor{\couleurSyntaxInt}$\clock \opstrict \param$}
\newcommand{\cellSyntaxIneqLarge}{\cellcolor{\couleurSyntaxIneqLarge}$\clock \leq\geq \param|\constantNonneg$}
\newcommand{\cellSyntaxJLRg}{\cellcolor{\couleurSyntaxJLR}\ensuremath{\clock \compOp \plterm}} %
\newcommand{\cellSyntaxNone}{\cellcolor{gray!50}None}
	\newcommand{\marginX}{\marginnote{\huge{\quad\quad\textbf{!}\quad\quad}}}
	\newcommand{\ea}[1]{\mbox{}{\color{violet}\marginX{}\textbf{[\'Etienne}: #1]}}
	\newcommand{\instructions}[1]{{\color{blue}\marginX{}\textbf{[\'Instructions: ``#1'']}}}
	\newcommand{\reviewer}[2]{\mbox{}{\color{red}\marginX{}\textbf{[Reviewer #1}: ``#2'']}}
	\newcommand{\todo}[1]{\mbox{}{\color{red}{\marginX{}\textbf{TODO}\ifx#1\\\else:\ \fi #1}}} %
	\newcommand{\instructions}[1]{}
	\newcommand{\ea}[1]{}
	\newcommand{\reviewer}[2]{}
	\newcommand{\todo}[1]{}
\def \VersionLong {}
	\newcommand{\LongVersion}[1]{#1}
	\newcommand{\ShortVersion}[1]{}
	\newcommand{\LongVersion}[1]{\ifdefined\VersionWithComments{\color{black!40}#1}\fi}
	\newcommand{\ShortVersion}[1]{\ifdefined\VersionWithComments{\color{red!40!black}#1}\else#1\fi}
\newcommand{\A}{{\ensuremath{\mathsf A}}}
\newcommand{\init}{_0}
\newcommand{\Actions}{\Sigma}
\newcommand{\action}{a}
\newcommand{\C}{C}
\newcommand{\Clock}{X} %
\newcommand{\ClockCard}{H} %
\newcommand{\clock}{x} %
\newcommand{\ClocksZero}{\vec{0}}
\newcommand{\clockval}{w} %
\newcommand{\compOp}{\bowtie}
\newcommand{\compOpLeq}{\trianglelefteq} %
\newcommand{\constant}{d}
\newcommand{\constantNonneg}{d^+}
\newcommand{\constantTimelaps}{\delta}
\newcommand{\counterValue}{c}
\newcommand{\edge}{e}
\newcommand{\Edges}{E}
\newcommand{\longuefleche}[1]{\stackrel{#1}{\mapsto}}
\newcommand{\longueflecheRel}{{\mapsto}}
\newcommand{\fleche}[1]{\stackrel{#1}{\rightarrow}}
\newcommand{\flecheRel}{{\rightarrow}}
\newcommand{\guard}{g}
\newcommand{\invariant}{I}
\newcommand{\loc}{l} %
\newcommand{\locinit}{\loc\init}
\newcommand{\Loc}{L} %
\newcommand{\LocFinal}{F} %
\newcommand{\Param}{P} %
\newcommand{\param}{p} %
\newcommand{\ParamCard}{M} %
\newcommand{\plterm}{\mathit{plt}}
\newcommand{\Problem}{\ensuremath{\mathcal{P}}}
\newcommand{\pval}{v} %
\newcommand{\resets}{R}
\newcommand{\sinit}{s\init} %
\newcommand{\somelocs}{G} %
\newcommand{\state}{s} %
\newcommand{\States}{S} %
\newcommand{\timeBound}{T}
\newcommand{\varproblem}{\ensuremath{\phi}}
\newcommand{\ceil}[1]{\lceil #1 \rceil}
\newcommand{\floor}[1]{\lfloor #1 \rfloor}
\newcommand{\reset}[2]{\ensuremath{[#1]_{#2}}}
\newcommand{\valuate}[2]{\ensuremath{#2(#1)}}
\newcommand{\DomainP}{{\ensuremath{\mathbb P}}}
\newcommand{\DomainT}{{\ensuremath{\mathbb T}}}
\newcommand{\grandn}{{\ensuremath{\mathbb N}}}
\newcommand{\grandq}{{\ensuremath{\mathbb Q}}}
\newcommand{\grandqplus}{{\ensuremath{\grandq^+}}} %
\newcommand{\grandr}{{\ensuremath{\mathbb R}}}
\newcommand{\grandrplus}{{\ensuremath{\grandr^+}}} %
\newcommand{\grandz}{{\mathbb Z}}
\newcommand{\mitlzeroinf}{{\ensuremath{\mathsf{MITL}_{0,\infty}}}}
\newcommand{\pmitlzeroinf}{{\ensuremath{\mathsf{PMITL}_{0,\infty}}}}
\newcommand{\BuchiEF}{Büchi}
\newcommand{\opstrict}{\mathrel{<>}}
\newcommand{\hytech}{\textsc{HyTech}}
\newcommand{\imitator}{\textsf{IMITATOR}}
\newcommand{\pat}{PAT}
\newcommand{\PSyHCoS}{PSyHCoS}
\newcommand{\romeo}{\textsc{Roméo}}
\newcommand{\uppaal}{\textsc{Uppaal}}
\newcommand{\defProblem}[3]
{%
\noindent\fcolorbox{black}{blue!15}{
	\begin{minipage}{.95\columnwidth}
		\textbf{#1 problem:}\\
		\textsc{Input}: #2\\
		\textsc{Problem}: #3
	\end{minipage}	
}
	\smallskip

}
 	\definecolor{colorok}{RGB}{80,80,150}
	\definecolor{colorok}{RGB}{0,0,0}
\newcommand{\eg}{\textcolor{colorok}{e.\,g.,}\xspace}
\newcommand{\ie}{\textcolor{colorok}{i.\,e.,}\xspace}
\newcommand{\viz}{\textcolor{colorok}{viz.,}\xspace}
\newcommand{\wrt}{\textcolor{colorok}{w.r.t.}\xspace}
\footnotesize\printfield{doi}}
\begin{document}

\title{What's decidable about parametric timed automata?}
\author{\'Etienne Andr\'e%
\thanks{%
	This is the author version of the manuscript of the same name published in the International Journal on Software Tools for Technology Transfer (\href{https://link.springer.com/journal/10009}{STTT}), April 2019, Volume 21, Issue 2, pp 203--219.
	The final version is available at \href{https://www.doi.org/10.1007/s10009-017-0467-0}{\nolinkurl{10.1007/s10009-017-0467-0}}.
	This work is partially supported by the ANR national research program PACS (ANR-14-CE28-0002).}
}                     %
\institute{Université Paris 13, LIPN, CNRS, UMR 7030, F-93430, Villetaneuse, France}
\date{Received: 3 August 2016 / Revised version: 23rd July 2017}
\maketitle

	\thispagestyle{plain}

\ifdefined \VersionWithComments
	\textcolor{red}{\textbf{This is the version with comments. To disable comments, comment out line~3 in the \LaTeX{} source.}}
\fi

\begin{abstract}
	Parametric timed automata (PTAs) are a powerful formalism to reason, simulate and formally verify critical real-time systems.
	After 25 years of research on PTAs, it is now well-understood that any non-trivial problem studied is undecidable for general PTAs.
	We provide here a survey of decision and computation problems for PTAs.
	On the one hand, bounding time, bounding the number of parameters or the domain of the parameters does not (in general) lead to any decidability.
	On the other hand, restricting the number of clocks, the use of clocks (compared or not with the parameters), and the use of parameters (\eg{} used only as upper or lower bounds) leads to decidability of some problems.
	We also put emphasis on open problems.
	We also discuss formalisms close to parametric timed automata (such as parametric hybrid automata or parametric interrupt timed automata), and we study tools dedicated to PTAs and their extensions.
\end{abstract}

\keywords{decidability, decision problems, parametric timed model checking, parameter synthesis, L/U-PTAs, hybrid automata}

\section{Introduction}\label{section:introduction}

The absence of undesired behaviors in real-time critical systems is of utmost importance in order to ensure the system safety.
Model checking aims at formally verifying a model of the system against a correctness property.
Timed automata (TAs) are a popular formalism to model and verify safety critical systems with timing constraints.
TAs extend finite state automata with clocks, \ie{} real-valued variables increasing linearly~\cite{AD94}.
These clocks can be compared with integer constants in guards (sets of linear inequalities that must be satisfied to take a transition) and invariants (sets of linear inequalities that must be satisfied to remain in a location).
TAs have been widely studied (see \eg{} \cite{AM04}), and several state-of-the-art model checkers (such as \uppaal{}~\cite{LPY97} or \pat{}~\cite{SLDP09}) support TAs as an input language.

TAs benefit from many interesting decidable properties, such as the emptiness of the accepted language, the reachability of a control state, etc.
	Other problems are undecidable though, such as the universality of the accepted timed language; in addition, given a TA, building a TA recognizing the complement of the timed language of the first TA cannot be achieved in general.
	TAs were also studied in a robust version, \ie{} when all timing guards can be enlarged or shrinked by an infinitesimal constant factor, without changing the language, the reachability of a control state, etc. (see \cite{Markey11,BMS13} for surveys).

However, TAs also suffer from some limitations.
First, they cannot be used to specify and verify systems incompletely specified (\ie{} whose timing constants are not known yet), and hence cannot be used in early design phases.
Second, verifying a system for a \emph{set} of timing constants usually requires to enumerate all of them one by one if they are supposed to be integer-valued; in addition, TAs cannot be used anymore to verify a system for a set of timing constants that are to be taken in a rational-
or real-valued dense interval.
Third, robustness in TAs often assumes that all guards can be enlarged or shrinked by the same small variation; considering independent variations or considering both enlarging and shrinking was not addressed. %

Parametric timed automata (PTAs) overcome these limitations by allowing the use of parameters (\ie{} unknown constants) in guards and invariants~\cite{AHV93}.
This increased expressive power comes at the price of the undecidability of most interesting problems---at least in the general case.

In this paper, we consider decision problems for PTAs proposed in the past 25~years.
On the one hand, bounding time, bounding the number of parameters or the domain of the parameters does not (in general) lead to any decidability.
On the other hand, restricting the number of clocks, the use of clocks (compared or not with the parameters), and the use of parameters (\eg{} used only as upper or lower bounds) can lead to the decidability of some problems.
In addition, an extension to parameters of some variants of timed automata benefit from some decidability results, such as reset-PTAs and parametric interrupt timed automata.

\paragraph{Related surveys}
To the best of our knowledge, no survey was dedicated specifically to decision problems for PTAs.
Moreover, in addition to numerous results in the past 25~years proved in various settings with different syntax and assumptions, recent results in the field in the past three years %
justify the need for a clear picture of these updated (un)decidability results.
Furthermore, surveying decision problems for PTAs has important practical implications as, for undecidable decision problems, the associated synthesis problems cannot be solved exactly.

Related works include \cite{AM04} that studies decidability results of timed automata.
In~\cite{Markey11,BMS13}, various problems related to the robustness in TAs are studied.
Then, \cite{HKPV98} is not a survey, but exhibits decidable subclasses of hybrid automata, an extension of timed automata where variables can have (in general) arbitrary rates.
Then, \cite{AMPS12} acts both as a survey and as a contribution paper that studies hybrid automata with ``low dimensions'', \ie{} with few variables.
Our survey is also concerned (in \cref{section:bounding}) with decidability results for PTAs with few variables (\ie{} clocks and parameters).

\paragraph{About this manuscript}
This manuscript is a revised and extended version of~\cite{Andre15}.
New results unpublished at the time of~\cite{Andre15} were added.
Moreover, \cref{table:dec:EFemptiness} was improved, and its description and summary was significantly enhanced.
In addition, two new sections were added: formalisms beyond PTAs are studied in \cref{section:beyond} and tools and applications of PTAs are reviewed in \cref{section:tools}.

\paragraph{Outline}
In \cref{section:preliminaries}, we propose a unified syntax for PTAs, and we define the decision problems that we will consider throughout this manuscript.
In \cref{section:undecidable}, we recall general undecidability results for PTAs.
We then study in \cref{section:bounding} the decidability when restricting the syntax of PTAs (number of variables, syntax of the constraints, etc.).
We consider specifically in \cref{section:LU} the subclass of PTAs where parameters must be used either always as lower-bounds or always as upper-bounds, namely L/U-PTAs.
Formalisms beyond PTAs, including parametric versions of hybrid automata, interrupt timed automata and time Petri nets, are studied in \cref{section:beyond}.
Tools supporting PTAs and known applications of PTAs are reviewed in \cref{section:tools}.
We conclude by emphasizing open problems in \cref{section:open}.

\section[Parametric timed automata and problems]{Parametric timed automata and problems}\label{section:preliminaries}
\subsection{Clocks, parameters and constraints}

Let $\grandz$, $\grandn$, $\grandqplus$ and $\grandrplus$ denote the sets of
	(possibly negative) integer numbers,
	(non-negative) natural numbers,
	non-negative rational numbers,
	and
	non-negative real numbers,
	respectively.
In the following, $\DomainT$ denotes the domain of time,
and $\DomainP$ the domain of the parameters;
these domains will be instantiated with $\grandn$, $\grandqplus$ or $\grandrplus$ subsequently.
Throughout this survey, let $\constant$ denote an integer constant in~$\grandz$, and $\constantNonneg$ denote a non-negative constant in~$\grandn$.

Let us assume a set~$\Clock = \{ \clock_1, \dots, \clock_\ClockCard \} $ of \emph{clocks}, that are $\DomainT$-valued variables that evolve at the same rate.
Let us assume a set~$\Param = \{ \param_1, \dots, \param_\ParamCard \} $ of \emph{parameters}, \ie{} unknown constants.
A parameter {\em valuation} $\pval$ is a function
$\pval : \Param \rightarrow \DomainP$. %
Throughout this survey, symbols $\clock$, $\clock_i$ denote clocks whereas $\param$, $\param_i$ denote parameters.

A \emph{parametric linear term} is $\sum_{1 \leq i \leq \ParamCard} \alpha_i \param_i + \constant$, with $\alpha_i \in \grandz$;
in the following $\plterm$ will denote a parametric linear term.

A \emph{(linear) inequality} is $\clock \compOp \plterm$, where $\clock$ is a clock, $\plterm$ a parametric linear term,
and \mbox{${\mathrel{\compOp}} \in \{ <, \leq, \geq, > \}$}.
We give in \cref{table:operators} the conventions used throughout this survey concerning comparison operators.
A \emph{(linear) constraint} is a conjunction of linear inequalities.
A \emph{(linear) diagonal constraint} is a conjunction of either linear inequalities, or \emph{linear diagonal inequalities} of the form $\clock_i - \clock_j \compOp \plterm$.

A \emph{simple inequality} is either $\clock \compOp \param$ or $\clock \compOp \constantNonneg$.
A \emph{simple constraint} is a conjunction of simple inequalities.

\begin{table}%
	\centering

	\begin{tabular}{| c | c |}
		\hline
		\cellHeader{Operator} & \cellHeader{Definition}\\
		\hline
		$\mathrel{\compOp} $ & $ \{ <, \leq, \geq, > \}$\\
		\hline
		$\mathrel{\leq\geq} $ & $ \{ \leq, \geq\}$\\
		\hline
		$\opstrict $ & $ \{ <, >\}$\\
		\hline
		$\mathrel{\compOpLeq} $ & $ \{ < , \leq \}$\\
		\hline
	\end{tabular}
	
	\caption{Syntax of operators in guards}
	\label{table:operators}
\end{table}

\subsection{A unified syntax for parametric timed automata}

The syntax of PTAs varies a lot in the literature; we give below a definition that should include most definitions in the literature, and at least all definitions of the papers considered in this survey.
That is, any definition of PTAs can be obtained from the following one by adding restrictions such as removing the set of accepting locations, forbidding invariants, restricting the domain of clocks or parameters, simplifying the syntax of the guards and invariants (\eg{} forbidding diagonal constraints), etc.

\begin{definition}[Parametric timed automaton]\label{def:PTA}
	A \emph{parametric timed automaton (PTA)} is a tuple \mbox{$\A = (\Sigma, \Loc, \locinit, \LocFinal, \Clock, \Param, \invariant, \Edges)$}, where:
	\begin{itemize}
		\item $\Sigma$ is a finite set of actions,
		\item $\Loc$ is a finite set of locations,
		\item $\locinit \in \Loc$ is the initial location,
		\item $\LocFinal \subseteq \Loc$ is a set of accepting (or final) locations,
		\item $\Clock$ is a set of clocks with domain $\DomainT = \grandrplus$,
		\item $\Param$ is a set of parameters with domain $\DomainP = \grandrplus$,
		\item $\invariant$ is the invariant, assigning to every $\loc\in \Loc$ a diagonal constraint $\invariant(\loc)$, and
		\item $\Edges$ is a set of edges  $(\loc,\guard,\action,\resets,\loc')$
		where
		$\loc,\loc'\in \Loc$ are the source and destination locations, $\guard$ is a diagonal constraint which is the transition guard, $\action \in \Sigma$, and $\resets\subseteq \Clock$ is a set of clocks to be reset.
	\end{itemize}
\end{definition}

Given a PTA~$\A$ and a parameter valuation~$\pval$, the \emph{valuation} of~$\A$ with~$\pval$, denoted by $\valuate{\A}{\pval}$, is the nonparametric PTA where each occurrence of~$\param$ is replaced with $\pval(\param)$.
If $\pval$ assigns an integer (or rational) value to each parameter, then $\valuate{\A}{\pval}$ is a~TA.
However, if some parameters are assigned to an irrational value, then $\valuate{\A}{\pval}$ belongs to the class of TAs with irrational constants, for which the reachability of a given location is undecidable~\cite{Miller00}.

A clock is said to be a \emph{parametric clock} if it is compared with at least one parameter in at least one guard or invariant; otherwise, it is a \emph{nonparametric clock}.
This notion is central when studying the decidability of problems for PTAs with few clocks and parameters.

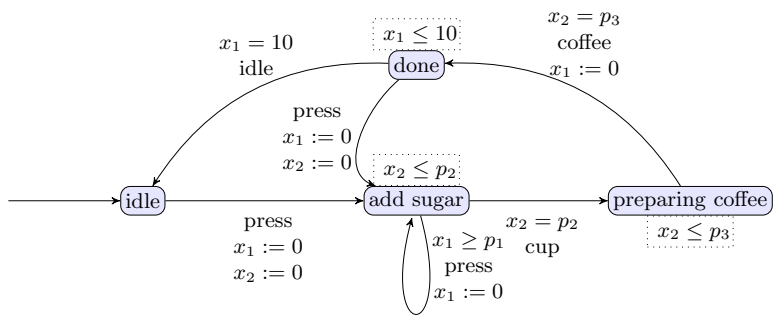
\begin{figure*}%
	\newcommand{\ratio}{0.5\textwidth}
 
	\centering

	\scalebox{.9}{
	\begin{tikzpicture}[scale=4, auto, ->, >=stealth']
 
		\node[location, initial] at (0,0) (idle) {\styleloc{idle}};
 
		\node[location] at (1,0) (add_sugar) {\styleloc{add\ sugar}};
		\node [invariant,above] at (add_sugar.north) {\begin{tabular}{@{} c @{\ } c@{} }& $ \styleclock{x_2} \leq \styleparam{\param_2}$\\\end{tabular}};
 
		\node[location] at (2,0) (preparing_coffee) {\styleloc{preparing\ coffee}};
		\node [invariant,below] at (preparing_coffee.south) {\begin{tabular}{@{} c @{\ } c@{} }& $ \styleclock{x_2} \leq \styleparam{\param_3}$\\\end{tabular}};
 
		\node[location] at (1,.5) (done) {\styleloc{done}};
		\node [invariant,above] at (done.north) {\begin{tabular}{@{} c @{\ } c@{} }& $ \styleclock{x_1} \leq 10$\\\end{tabular}};

		\path (idle) edge node[below]{\begin{tabular}{@{} c @{\ } c@{} }
		 & $\styleact{press}$\\
		 & $\styleclock{x_1}:=0$\\
		 & $\styleclock{x_2}:=0$\\
		\end{tabular}} (add_sugar);

		\path (add_sugar) edge[loop below] node[above right]{\begin{tabular}{@{} c @{\ } c@{} }
		& $ \styleclock{x_1} \geq \styleparam{\param_1}$\\
		 & $\styleact{press}$\\
		 & $\styleclock{x_1}:=0$\\
		\end{tabular}} (add_sugar);

		\path (add_sugar) edge node[below]{\begin{tabular}{@{} c @{\ } c@{} }
		& $ \styleclock{x_2} = \styleparam{\param_2}$\\
		 & $\styleact{cup}$\\
		\end{tabular}} (preparing_coffee);

		\path (preparing_coffee) edge[bend right] node[above]{\begin{tabular}{@{} c @{\ } c@{} }
		& $ \styleclock{x_2} = \styleparam{\param_3}$\\
		 & $\styleact{coffee}$\\
		 & $\styleclock{x_1}:=0$\\
		\end{tabular}} (done);

		\path (done) edge[out=220,in=160] node[left]{\begin{tabular}{@{} c @{\ } c@{} }
		 & $\styleact{press}$\\
		 & $\styleclock{x_1}:=0$\\
		 & $\styleclock{x_2}:=0$\\
		\end{tabular}} (add_sugar);
 
		\path (done) edge[bend right] node[above]{\begin{tabular}{@{} c @{\ } c@{} }
		& $ \styleclock{x_1} = 10$\\
		 & $\styleact{idle}$\\
		\end{tabular}} (idle);
	\end{tikzpicture}
	}
	\caption{A coffee machine modeled using a PTA}
	\label{fig:coffee}
\end{figure*}

\begin{example}
	Consider the coffee machine in \cref{fig:coffee}, modeled using a PTA with 4 locations, 2 clocks ($\styleclock{x_1}$ and~$\styleclock{x_2}$) and 3 parameters ($\styleparam{\param_1}, \styleparam{\param_2}, \styleparam{\param_3}$).
	Both clocks $\styleclock{x_1}$ and $\styleclock{x_2}$ are parametric clocks.
	No diagonal constraints are used (in fact all constraints are simple constraints).
	The machine can initially idle for an arbitrarily long time.
	Then, whenever the user presses the (unique) button (action \styleact{press}), the PTA enters location ``add sugar'', resetting both clocks.
	The machine can remain in this location as long as the invariant ($\styleclock{x_2} \leq \styleparam{\param_2}$) is satisfied;
	there, the user can add a dose of sugar by pressing the button (action \styleact{press}), provided the guard ($\styleclock{x_1} \geq \styleparam{\param_1}$) is satisfied, which resets~$\styleclock{x_1}$.
	That is, the user cannot press twice the button (and hence add two doses of sugar) in a time less than~$\styleparam{\param_1}$.
	Then, $\styleparam{\param_2}$ time units after the machine left the idle mode, a cup is delivered (action \styleact{cup}), and the coffee is being prepared;
	eventually, $\styleparam{\param_2}$ time units after the machine left the idle mode, the coffee (action \styleact{coffee}) is delivered.
	Then, after 10 time units, the machine returns to the idle mode---unless a user again requests a coffee by pressing the button.
\end{example}

\paragraph{Semantics}
\newcommand{\wv}[2]{#1|#2} %

The semantics of a PTA~$\A$ can be defined as the union over all parameter valuations~$\pval$ of the semantics of~$\valuate{\A}{\pval}$.
In the following, given $\constantTimelaps \in \grandrplus$, $\clockval + \constantTimelaps$ denotes the valuation such that $(\clockval + \constantTimelaps)(\clock) = \clockval(\clock) + \constantTimelaps$, for all $\clock \in \Clock$.
Given $\resets \subseteq \Clock$, we define the \emph{reset} of a clock valuation~$\clockval$, denoted by $\reset{\clockval}{\resets}$, as the valuation resetting the clocks in~$\resets$, and keeping the other clocks unchanged.
Given a rational parameter valuation~$\pval$, $\valuate{\C}{\pval}$ denotes the constraint over~$\Clock$ obtained by replacing each parameter~$\param$ in~$\C$ with~$\pval(\param)$.
Likewise, given a clock valuation~$\clockval$, $\valuate{\valuate{\C}{\pval}}{\clockval}$ denotes the expression obtained by replacing each clock~$\clock$ in~$\valuate{\C}{\pval}$ with~$\clockval(\clock)$.
We use the notation $\wv{\clockval}{\pval} \models \C$ to indicate that $\valuate{\valuate{\C}{\pval}}{\clockval}$ evaluates to true.
We write $\ClocksZero$ for the clock valuation that assigns~$0$ to all clocks.

\begin{definition}[Concrete semantics of a TA]
	Given a PTA $\A = (\Sigma, \Loc, \locinit, \LocFinal, \Clock, \Param, \invariant, \Edges)$,
	and a rational parameter valuation~\(\pval\),
	the concrete semantics of $\valuate{\A}{\pval}$ is given by the timed transition system $(\States, \sinit, \flecheRel)$, with
	\begin{itemize}
		\item $\States = \{ (\loc, \clockval) \in \Loc \times \grandrplus^\Clock \mid \wv{\clockval}{\pval} \models \invariant(\loc) \}$,%
		\item $\sinit = (\locinit, \ClocksZero) $,
		\item $\flecheRel$ consists of the discrete and (continuous) delay transition relations:
				\begin{itemize}
			\item discrete transitions: $(\loc,\clockval) \fleche{\edge} (\loc',\clockval')$, %
				if $(\loc, \clockval) , (\loc',\clockval') \in \States$, there exists $\edge = (\loc,\guard,\action,\resets,\loc') \in \Edges$,
					$\clockval' = \reset{\clockval}{\resets}$
					and $\wv{\clockval}{\pval} \models \guard$.
			\item delay transitions: $(\loc,\clockval) \fleche{\constantTimelaps} (\loc, \clockval+\constantTimelaps)$, with $\constantTimelaps \in \grandrplus$, if $\forall \constantTimelaps' \in [0, \constantTimelaps], (\loc, \clockval+\constantTimelaps') \in \States$.
		\end{itemize}
	\end{itemize}
\end{definition}

Moreover we write $(\loc, \clockval)\longuefleche{\edge} (\loc',\clockval')$ for a sequence of delay and discrete transitions where
	$((\loc, \clockval), \edge, (\loc', \clockval')) \in \longueflecheRel$ if
		$\exists \constantTimelaps, \clockval'' :  (\loc,\clockval) \fleche{\constantTimelaps} (\loc,\clockval'') \fleche{\edge} (\loc',\clockval')$.

Given a TA~$\valuate{\A}{\pval}$ with concrete semantics $(\States, \sinit, \flecheRel)$,
we refer to the states of~$\States$ as the \emph{concrete states} of~$\valuate{\A}{\pval}$.
A concrete run of~$\valuate{\A}{\pval}$ is an alternating sequence of concrete states of $\valuate{\A}{\pval}$ and edges starting from the initial concrete state $\sinit$ of the form 
$\sinit \longuefleche{\edge_0} \state_1\longuefleche {\edge_1} \cdots \longuefleche{\edge_{m-1}} \state_m$, such that for all $i = 0, \dots, m-1$, $\edge_i \in \Edges$, $\state_{i+1} \in \States$, and $(\state_i , \edge_i , \state_{i+1}) \in \longueflecheRel$.
Given a concrete state~$\state=(\loc, \clockval)$, we say that $\state$ is reachable (or that $\valuate{\A}{\pval}$ reaches $\state$) if $\state$ belongs to a concrete run of $\valuate{\A}{\pval}$.
By extension, we say that $\loc$ is reachable in~$\valuate{\A}{\pval}$.
A run is \emph{maximal} if it is either infinite, or cannot be extended by any discrete transition (possibly after some delay transition).

A finite run $\sinit \longuefleche{\edge_0} \state_1\longuefleche {\edge_1} \cdots \longuefleche{\edge_{m-1}} (\loc_m, \clockval_m)$ is \emph{accepting} if $\loc_m \in \LocFinal$.

The accepted timed language is the set of timed words (alternating sequences of actions and time elapsing) associated with an accepting run, \ie{} a run ending in a location of $\LocFinal$ (or, in some works, passing infinitely often by a location in~$\LocFinal$).
Note that some works make a difference between finite and infinite runs.
The untimed language of a TA is the timed language projected onto the actions.
The set of traces (or trace set) is the set of accepting runs projected onto the locations and actions, \ie{} a set of alternating locations and actions.
This is a nonstandard definition of traces (compared to \eg{} \cite{vanGlabbeek90}), but we keep this term as it is used in, \eg{} \cite{ACEF09,AM15}.

A \emph{symbolic semantics} is also defined for PTAs in~\cite{HRSV02,ACEF09,JLR15} as a parametric zone graph, where a symbolic state is made of a discrete part (the current location) and a symbolic, continuous part (a set of diagonal constraints, \ie{} $\clock_i - \clock_j \compOp \plterm$, sometimes allowing disjunctions).
\todo{donner ou sans intérêt ?}

\paragraph{Simple PTAs}
We define \emph{simple PTAs} as the subclass of PTAs where guards and invariants are simple constraints.
We propose this class to show that, even in this restricted situation, all non-trivial problems are undecidable (\cref{section:undecidable}).

\paragraph{Variants of the PTA syntax}
PTAs were first defined in~\cite{AHV93} using a set of accepting locations.
This is similar to timed automata~\cite{AD94}.
\emph{Timed safety automata} %
were introduced later in~\cite{HNSY94} by removing the final locations, but adding invariants to locations; many subsequent papers then refer to timed safety automata as simply ``timed automata''.
When timed automata with accepting locations are equipped with Büchi conditions (to be accepting, an infinite timed word must pass infinitely often through at least one of the accepting locations), they are referred to as timed Büchi automata. %
It was shown that the timed expressive power of timed safety automata is strictly less than that of timed Büchi automata~\cite{HKW95}.

The syntax of %
PTAs differs in most of the papers in the literature.
Concerning guards and invariants,
	in~\cite{AHV93} (resp.~\cite{Miller00}), guards (resp.\ guards and invariants) are conjunctions of inequalities of the form $\clock \compOp \param$.
	In~\cite{HRSV02,BlT09}, guards are conjunctions of inequalities of the form $\clock_i - \clock_j \compOpLeq \plterm \cup \{\infty \}$;
		in~\cite{HRSV02} invariants have the same form as guards (invariants are not considered in~\cite{BlT09}).
		In~\cite{ACEF09}, any linear constraint over $\Clock \cup \Param$ is allowed in guards and invariants.
	In~\cite{Doyen07}, guards and invariants are all open, \ie{} of the form $\clock \opstrict \param$ or $\clock \opstrict \constantNonneg$.
	In~\cite{JLR15,ALR16FORMATS}, guards and invariants are conjunctions of inequalities of the form $\clock \compOp \plterm$; in addition, in~\cite{JLR15} invariants can only bound clocks from above (\ie{} $\clock \compOpLeq \plterm$).
	In~\cite{BBLS15}, guards are conjunctions of inequalities of the form $\clock \compOp \param$ and invariants can only bound clocks from above (\ie{} $\clock \compOpLeq \param$).
	In~\cite{AM15}, guards and invariants are conjunctions of inequalities of the form $\clock \compOp \param + \constant$, $\clock \compOp \constantNonneg$ or $\param \compOp \constant$ (although the proofs of undecidability only need inequalities of the form $\clock \compOp \param $ or $\clock \compOp \constantNonneg$).
	In~\cite{ALR15,ALR16ICFEM}, guards and invariants are conjunction of simple inequalities.

A set of accepting locations is considered in~\cite{AHV93,BlT09,BBLS15,ALR16FORMATS}, but only~\cite{BlT09} is interested in infinite accepting runs, \ie{} runs that pass infinitely often by an accepting location; hence this latter work considers what could be referred to as parametric timed Büchi automata.
In contrast, \cite{Miller00,HRSV02,Doyen07,ACEF09,JLR15,AM15,ALR16ICFEM} consider parametric timed safety automata (\ie{} without accepting locations).

\begin{remark}\label{remark:invariants}
	The restriction that the invariants can only bound clocks from above (\ie{} $\clock \compOpLeq \plterm$) is not a real restriction:
	in timed automata, invariant that bound clocks from below (\ie{} $\constant \compOpLeq \clock $) can be moved to all incoming edges.
	The same applies to PTAs.
	In other words, papers defining PTAs requiring invariants to use only invariants with clocks bounded from above are equivalent to PTAs with no restrictions at all on the invariants.
\end{remark}

\paragraph{Expressiveness}
A comparison of the expressiveness of these different syntactic models remains to be done.
Whereas it is likely that allowing constraints of the form $\clock \compOp \plterm$ may be simulated using constraints of the form $\clock \compOp \param \mid \constantNonneg$ (perhaps adding additional locations, clocks and parameters), the expressiveness may differ when adding a set of accepting locations.
In fact, the expressiveness of a PTA was not even defined, until we recently proposed two first possible definitions~\cite{ALR16FORMATS}:
	the expressiveness of a PTA~$\A$ (with accepting locations) is either the union over all parameter valuations of the accepted untimed words (``untimed language of~$\A$''), or the union over all parameter valuations of pairs made of an accepted untimed word and the associated valuation (``constrained untimed language of~$\A$'').
Then, several subclasses of PTAs are compared \wrt{} these two definitions.

However, no comparison of the syntax used in guards and invariants was proposed.
A challenging future work would be to show that a PTA with constraints of the form $\clock \compOp \plterm$ can be for example translated into an equivalent PTA with constraints of the form $\clock \compOp \param \mid \constantNonneg$ at the cost of~$n$ additional clocks and/or parameters.

\subsection{Decision and computation problems}\label{ss:problems}

Following the presentation in~\cite{JLR15},
given a class of decision problems \Problem{} (reachability, unavoidability, etc.), let us define the \Problem{}-emptiness, the \Problem{}-universality and the \Problem{}-finiteness.
\smallskip

\defProblem
	{\Problem-emptiness}
	{A PTA~\A{} and an instance $\varproblem$ of \Problem{}}
	{Is the set of parameter valuations $\pval$ such that $\valuate{\A}{\pval}$ satisfies $\varproblem$ empty?}

\defProblem
	{\Problem-universality}
	{A PTA~\A{} and an instance $\varproblem$ of \Problem{}}
	{Are all parameter valuations $\pval$ such that $\valuate{\A}{\pval}$ satisfies $\varproblem$?}

\defProblem
	{\Problem-finiteness}
	{A PTA~\A{} and an instance $\varproblem$ of \Problem{}}
	{Is the set of parameter valuations $\pval$ such that $\valuate{\A}{\pval}$ satisfies $\varproblem$ finite?}

\smallskip

In this survey, we mainly focus on reachability and unavoidability properties, and call them EF and AF respectively.\footnote{%
	The names EF, AF, EG, AG were first used for PTAs in~\cite{JLR15}, and come from the CTL syntax.
}
For example, given a PTA~$\A$ and a subset~$\somelocs$ of its locations\footnote{%
	In general, it can be handful to set $\somelocs = \LocFinal$; but as not all definitions of PTAs in the literature have accepting locations, we use here the set~$\somelocs$ to denote goal locations.
}, EF-emptiness asks:
	``is the set of parameter valuations~$\pval$ such that at least one location of~$\somelocs$ is reachable in $\valuate{\A}{\pval}$ empty?''
And AF-universality asks: ``are all parameter valuations~$\pval$ such that any location in~$\somelocs$ is unavoidable in $\valuate{\A}{\pval}$?''
We will also mention the EG property, that checks whether there exists a maximal run along which the locations remain in\ShortVersion{~a subset}~$\somelocs$\ShortVersion{~of the locations},
and the AG property that checks whether the locations remain in~$\somelocs$ for all runs.\footnote{%
	Note that EF-, AF-, EG-, and AG-emptiness are equivalent to
	AG-, EG-, AF-, EF-universality,
	respectively.
}

Additionally, we will consider the language (resp.\ trace) preservation (emptiness) problem~\cite{AM15}:
given a PTA~$\A$ and a parameter valuation~$\pval$, does there exist another valuation~$\pval' \neq \pval$ such that the untimed languages (resp.\ sets of traces) of $\valuate{\A}{\pval}$ and $\valuate{\A}{\pval'}$ are the same?

\ShortVersion{%
We finally define the \Problem-synthesis problem:
	Given a PTA~\A{} and an instance $\varproblem$ of \Problem{},
	compute the parameter valuations such that $\valuate{\A}{\pval}$ satisfies $\varproblem$.
}
We finally define the following computation problem:

\smallskip

\defProblem
	{\Problem-synthesis}
	{A PTA~\A{} and an instance $\varproblem$ of \Problem{}}
	{Compute the parameter valuations such that $\valuate{\A}{\pval}$ satisfies $\varproblem$.}

\smallskip
For example, given a PTA~$\A$ and a subset~$\somelocs$ of its locations, EF-synthesis consists in synthesizing parameter valuations $\pval$ such that at least one location of~$\somelocs$ is reachable in $\valuate{\A}{\pval}$ from the initial state.

\begin{example}
	Let us exemplify some decision and computation problems for the PTA in \cref{fig:coffee}.
	Assume the unique target location is ``done'', \ie{} $\somelocs=\{ \styleloc{done} \}$.
	EF-emptiness asks whether the set of parameter valuations that can reach location ``\styleloc{done}'' for some run is empty; this is false (\eg{} $\styleparam{\param_1} = 1$, $\styleparam{\param_2} = 2$, $\styleparam{\param_3} = 3$ can reach ``\styleloc{done}'').
	EF-universality asks whether all parameter valuations can reach location ``\styleloc{done}'' for some run; this is false (no parameter valuation such that $\styleparam{\param_2} > \styleparam{\param_3}$ can reach ``done'').
	AF-emptiness asks whether the set of parameter valuations that can reach location ``\styleloc{done}'' for all runs is empty; this is false (\eg{} $\styleparam{\param_1} = 1$, $\styleparam{\param_2} = 2$, $\styleparam{\param_3} = 3$ cannot avoid ``\styleloc{done}'').
	EF-synthesis consists in synthesizing all valuations for which a run reaches location ``\styleloc{done}''; the resulting set of valuations is $0 \leq \styleparam{\param_2} \leq \styleparam{\param_3} \leq 10 \land \styleparam{\param_1} \geq 0$.\ea{en fait, seulement param positive ici}
\end{example}

\section[Almost everything is undecidable for simple PTAs]{Almost everything is undecidable for simple PTAs}\label{section:undecidable}

In this entire section, we consider simple PTAs without restriction on the number of clocks and parameters.
In that situation, all non-trivial problems studied in the literature are undecidable, with the exception of the membership problem (that asks whether the language of a valuated PTA is empty)---which is rather a problem for TAs.
By non-trivial, we mean requiring a semantic analysis, and not, \eg{} a sole analysis of the syntax of the PTA (\eg{} ``is the number of clocks even'', or any problem defined in \cref{ss:problems} by setting $\somelocs = \Loc$).

We also show that bounding time (\cref{ss:b-time}) or bounding the parameter domain for rational-valued parameters (\cref{ss:b-domain}) preserves the undecidability.
However, we will show in \cref{section:bounding} that bounding the number of clocks and/or parameters brings decidability.

All proofs of undecidability reduce from either the halting problem, or the boundedness problem, of a 2-counter machine, both known to be undecidable~\cite{Minsky67}.

\subsection{Decidability of the membership}
In~\cite{AHV93}, the membership problem for PTAs is defined as follows:
given a PTA~$\A$ and a parameter valuation~$\pval$, is the language of $\valuate{\A}{\pval}$ empty?
The membership problem is not strictly speaking a problem for PTAs, but rather for TAs, since it considers a valuated PTA.
As a consequence, the decidability of this problem only relies on known results for~TAs.

On the one hand,
the membership problem is decidable (and PSPACE-complete) for PTAs
	over discrete time ($\DomainT = \grandn$ and $\DomainP = \grandn$),
	over dense time with integer-valued parameters ($\DomainT = \grandrplus$ and $\DomainP = \grandn$),
	and over dense time with rational-valued parameters ($\DomainT = \grandrplus$ and $\DomainP = \grandq$)~\cite{AD94}.

On the other hand, the membership problem becomes undecidable with real-valued (in fact irrational) parameters.\ea{OUI, c'est la bonne réf : le ``paramètre'' irrationnel est une constante FIXÉE (quoique arbitraire)}
Indeed, the reachability of a location in a TA with irrational constants is undecidable~\cite{Miller00}.
The idea is to encode a 2-counter machine using 2 clocks~$\clock_1$ and~$\clock_2$ (plus an additional third clock), where the value~$\counterValue_i$ of counter~$i$ is encoded using $\clock_i = \counterValue_i \times \tau$, for $i \in \{ 1, 2 \}$, with $\tau$ the irrational constant (the value $\sqrt{2}$ is suggested for~$\tau$).

\subsection{General undecidable problems}
\paragraph{EF-emptiness}
The seminal paper on PTAs~\cite{AHV93} showed that the EF-emptiness problem is undecidable for PTAs, both over discrete time, and over dense time (real-valued clocks and real-valued parameters).
The proof consists in reducing from the halting problem of a 2-counter machine.
The idea of the encoding of the 2-counter machine is to use parameters (the value of which can be arbitrarily large) to encode the maximum value of the counters.
Although not explicitly stated in~\cite{AHV93}, the proof of undecidability also works for real-valued clocks with integer-valued parameters.

\paragraph{AF-emptiness}
In~\cite{JLR15}, it is proved that the AF-emptiness is undecidable for L/U-PTAs (a subclass of PTAs, see \cref{section:LU}) with 3 clocks and 4 integer-valued parameters, and hence for PTAs as well.
Again, the proof of undecidability consists in reducing from the halting problem of a 2-counter machine.
Another proof is provided in~\cite{ALR16ICFEM} that uses 3 clocks and only 2 rational-valued parameters.

\paragraph{AG-emptiness}
In~\cite{ALR16ICFEM}, it is proved that the AG-emptiness problem is undecidable with 3 clocks and 2 rational-valued parameters.

\paragraph{EG-emptiness}
In~\cite{ALime17}, it is proved that the EG-emptiness problem is undecidable with 4 clocks and 3 parameters.

\begin{remark}
	For all three previous problems (AF-emptiness, AG-emptiness and EG-emptiness), the result is in fact proved for a subclass of PTAs---namely L/U-PTAs for AF-emptiness and EG-emptiness, and bounded integer-points PTAs (see \cref{ss:IPPTA}) for AG-emptiness---so the number of clocks and parameters needed for the encoding is certainly not minimal for general PTAs, and might therefore be reduced using smarter constructions.
\end{remark}

\begin{remark}
	Note that the undecidability of all of these problems rules out the possibility to perform exact parametric model checking of CTL-like properties on PTAs.
\end{remark}

\paragraph{Language and trace preservation problems}

Both the language preservation and the trace preservation problems are undecidable for simple PTAs~\cite{AM15}.
The \emph{continuous} (or robust) versions of those problems additionally require that the language (resp.\ set of traces) is preserved under any intermediary valuation of the form $\lambda \cdot \pval + (1-\lambda)\cdot \pval'$, for $\lambda\in[0,1]$ (with the classical definition of addition and scalar multiplication).

The language preservation problems and its continuous version are undecidable for a PTA with at least 4~parametric clocks~\cite{AM15}.

The trace preservation and its continuous version are undecidable too; the proof of this result comes with three flavors:
\begin{enumerate}
	\item the first proof involves diagonal constraints (\ie{} of the form $\clock_i - \clock_j \compOp \plterm$, which goes beyond the syntax of simple PTAs), but only a fixed number of parametric clocks~\cite{AM15};
	\item the second proof does not involve diagonal constraints.
	It involves a bounded number of locations (but with an unbounded number of transitions) and an unbounded number of parametric clocks; by unbounded we mean not constant but depending on the size of the counter-machine~\cite{AM15};
	\item the third proof uses a bounded number of clocks and parameters, and an unbounded number of locations~\cite{ALM18}.
\end{enumerate}
The need for an unbounded number of clocks in the first two versions of this proof comes from the fact that the proof encodes the 2-counter machine with a fixed number of locations (to reduce easily from language preservation to trace preservation), which thus requires to encode each location with a different clock.
Note that the first two versions of the proof are, to the best of our knowledge, the only attempt to model a 2-counter machine using PTAs with a constant number of locations (at the cost of an unbounded number of clocks).

\subsection{Bounding time}\label{ss:b-time}

Bounded-time model checking consists in checking a property \emph{within a bounded time domain}.
That is, we assume a predefined time bound (say $\timeBound$), and we only consider the system behavior in the time interval $[0, \timeBound]$.
Undecidable problems might become decidable in this situation, or be of a lower complexity.
For example, the language inclusion for timed automata becomes decidable over bounded-time~\cite{OW10}, although it is undecidable in general.
In addition, time-bounded reachability becomes decidable for a special subclass of hybrid automata with monotonic (either non-negative or non-positive) rates~\cite{BDGORW13}, although it is undecidable in general.

In contrast, the EF-emptiness problem remains undecidable for (general) PTAs over bounded, dense time~\cite[Theorem 3.4]{Jovanovic13}.

This said, we emphasize that (quite trivially) model checking \emph{discrete-time} PTAs over bounded-time would become decidable; the same is likely to hold for \emph{dense-time} PTAs with \emph{integer-valued} parameters over bounded-time.
(This remains to be shown formally though.)
\subsection{Bounding the parameter domain}\label{ss:b-domain}

Bounding the parameter domain consists in setting a minimal and a maximal (non-infinite) bound on the possible parameter valuations of a PTA.

\paragraph{Decidability for integer-valued parameters}
For integer parameters, any problem for a PTA over a bounded parameter domain is decidable iff the corresponding problem is decidable for a TA.
In fact, the $\Problem$-emptiness problem for PTAs with bounded integer is PSPACE-complete for any class of problems $\Problem$ that is PSPACE-complete for TAs~\cite{JLR15}.
Indeed, it suffices to enumerate all parameter valuations, of which there is a finite number.
As a consequence, EF-, AF-, EG-, AG-emptiness are all decidable; and so are language and trace preservation.
More generally, the whole TCTL model checking, including reachability and unavoidability, is PSPACE-complete~\cite{ACD93}, and therefore the corresponding emptiness problems are PSPACE-complete for PTAs with bounded integer parameters.

In~\cite{JLR15}, a symbolic method is proposed to compute EF- and AF-synthesis; experiments showed that this symbolic computation is faster than an exhaustive enumeration (using \uppaal{}).

\paragraph{Undecidability for rational-valued parameters}
For rational-valued parameters, the EF-emptiness problem is undecidable for a single parameter in $[1,2]$~\cite{Miller00}.
EG-emptiness~\cite{ALime17}, AF- and AG-emptiness~\cite{ALR16ICFEM}, as well as language and trace preservation~\cite{AM15} are also undecidable for one or two rational-valued bounded parameter(s) (typically bounded by~$[0,1]$).

\section[Bounding the numbers of clocks and parameters]{Bounding the numbers of clocks and parameters}\label{section:bounding}
\subsection{EF-emptiness}

Since the seminal paper on PTAs~\cite{AHV93}, the decidability of the EF-emptiness problem was studied in various settings, by bounding the number of parametric clocks, of nonparametric clocks, and of parameters.
The syntax was also restrained.

\begin{table*}%

	\centering
	\footnotesize
	\setlength{\tabcolsep}{1pt}
\scalebox{1}{
	\begin{tabular}{| c | c | c | c | c | c | c | c | c |}
		\hline
		\cellHeader{\DomainT{}} & \cellHeader{}\DomainP{} & \cellHeader{Guards} & \cellHeader{Invariants} & \cellHeader{P-clocks} & \cellHeader{NP-clocks} & \cellHeader{Params} & \cellHeader{Decidability} & \cellHeader{Main ref.}\\
		\hline
		\cellN{} & \cellN{} & \multicolumn{2}{c|}{\cellSyntaxComp{}} &1 & 0 & fixed & \cellDec{}(at most) PTIME & \cite{Miller00} (consequence)\\
		\hline
		\cellN{} & \cellN{} & \multicolumn{2}{c|}{\cellSyntaxComp{}} &1 & 0 & any & \cellDec{}(at most) NP-complete & \cite{Miller00} (consequence)\\

		\hline
		\cellN{} & \cellN{} & %
			\multicolumn{2}{c|}{\cellSyntaxComp{}}
			& 1 & any & any & \cellDec{}NEXPTIME-complete & \cite{BO14,BBLS15} (consequence)\\

		\hline
		\cellN{} & \cellN{} & \multicolumn{2}{c|}{\cellSyntaxIneqLarge{}} & 2 & any & 1 & \cellDec{}PSPACE$^\mathrm{NEXP}$-hard & \cite{BO14}\\
		\hline

		\cellN{} & \cellN{} & \multicolumn{2}{c|}{any} & 2 & any & $>1$ & \cellOpen{} & \cellOpenref{}\\
		\hline

		\cellN{} & \cellN{} & \cellSyntaxComp{} & \cellSyntaxNone{} & 3 & 0 & 1 & \cellUndec{}undecidable & \cite{BBLS15}\\

		\hline
		\cellN{} & \cellN{} & \cellSyntaxEq{} & \cellSyntaxNone{} & 3 & 0 & 6 & \cellUndec{}undecidable & \cite{AHV93}\\

		\hline
		\cellN{} & \cellN{} & \multicolumn{2}{c|}{\cellSyntaxOpen{}} & any & any & any & \cellOpen{} & \cellOpenref{}\\
		\hline
		\cellN{} & \cellNbounded{} & %
			\multicolumn{2}{c|}{\cellSyntaxJLRg{}}
			&any & any & any & \cellDec{}(at most) PSPACE-complete & \cite{JLR15} (consequence)\\
		\hline
		\cellR{} & \cellN{} & \multicolumn{2}{c|}{\cellSyntaxComp{}} &1 & 0 & fixed & \cellDec{}(at most) PTIME & \cite{Miller00} (consequence)\\
		\hline
		\cellR{} & \cellN{} & \multicolumn{2}{c|}{\cellSyntaxComp{}} &1 & 0 & any & \cellDec{}(at most) NP-complete & \cite{Miller00} (consequence)\\

		\hline
		\cellR{} & \cellN{} & %
			\multicolumn{2}{c|}{\cellSyntaxComp{}}
			& 1 & any & any & \cellDec{}NEXPTIME & \cite{BBLS15}\\

		\hline
		\cellR{} & \cellN{} & \multicolumn{2}{c|}{any} & 2 & any & any & \cellOpen{} & \cellOpenref{}\\

		\hline
		\cellR{} & \cellN{} & \cellSyntaxComp{} & \cellSyntaxNone{} & 3 & 0 & 1 & \cellUndec{}undecidable & \cite{BBLS15}\\
		\hline
		\cellR{} & \cellN{} & \cellSyntaxEq{} & \cellSyntaxNone{} & 3 & 0 & 6 & \cellUndec{}undecidable & \cite{AHV93} (consequence)\\

		\hline
		\cellR{} & \cellN{} & \multicolumn{2}{c|}{\cellSyntaxOpen{}} & any & any & any & \cellOpen{} & \cellOpenref{}\\
		\hline
		\cellR{} & \cellNbounded{} & %
			\multicolumn{2}{c|}{\cellSyntaxJLRg{}}
			& any & any & any & \cellDec{}PSPACE-complete & \cite{JLR15}\\
		\hline
		\cellR{} & \cellQ{} & \multicolumn{2}{c|}{\cellSyntaxComp{}} &1 & 0 & fixed & \cellDec{}PTIME & \cite{Miller00}\\
		\hline
		\cellR{} & \cellQ{} & \multicolumn{2}{c|}{\cellSyntaxComp{}} &1 & 0 & any & \cellDec{}NP-complete & \cite{Miller00}\\

		\hline
		\cellR{} & \cellQ{} & \multicolumn{2}{c|}{any} &1 & 1 or 2 & any & \cellOpen{} & \cellOpenref{}\\

		\hline
		\cellR{} & \cellQboundedonetwo{} & \multicolumn{2}{c|}{\cellSyntaxComp{}} &1 & 3 & 1 & \cellUndec{}undecidable & \cite{Miller00}\\

		\hline
		\cellR{} & \cellQ{} & \multicolumn{2}{c|}{any} &2 & 0 or 1 & any & \cellOpen{} & \cellOpenref{}\\

		\hline
		\cellR{} & \cellQboundedonetwo{} & \multicolumn{2}{c|}{\cellSyntaxComp{}} &2 & 2 & 1 & \cellUndec{}undecidable & \cite{Miller00} (consequence)\\

		\hline
		\cellR{} & \cellQboundedonetwo{} & \multicolumn{2}{c|}{\cellSyntaxComp{}} &3 & 0 & 1 & \cellUndec{}undecidable & \cite{Miller00}\\
		\hline
		\cellR{} & \cellR{} & \cellSyntaxEq{} & \cellSyntaxNone{} & 3 & 0 & 6 & \cellUndec{}undecidable & \cite{AHV93}\\
		\hline
		\hline
		\cellR{} & \cellQ{} & \multicolumn{2}{c|}{\cellSyntaxOpen{}} & $< 2$ & $3$ & $2$ & \cellOpen{} & \cellOpenref{}\\
		\hline
		\cellR{} & \cellQ{} & \multicolumn{2}{c|}{\cellSyntaxOpen{}} & $2$ & $< 3$ & $2$ & \cellOpen{} & \cellOpenref{}\\
		\hline
		\cellR{} & \cellQ{} & \multicolumn{2}{c|}{\cellSyntaxOpen{}} & $2$ & $3$ & $< 2$ & \cellOpen{} & \cellOpenref{}\\
		\hline
		\cellQR{} & \cellQR{} & \multicolumn{2}{c|}{\cellSyntaxOpen{}} &2 & 3 & 2 & \cellUndec{}undecidable & \cite{Doyen07}\\
		\hline
	\end{tabular}
}

	\caption{Decidability of the EF-emptiness problem for general PTAs}
	\label{table:dec:EFemptiness}
\end{table*}

\ea{warning pour AHV93 nonelementary: en fait, leur construction n'utilise que des égalités… quid de BO14 ?}

We summarize these results in \cref{table:dec:EFemptiness}.\footnote{%
	This table is partially inspired by a similar table in \cite{Doyen07}, improved by adding more dimensions, and of course more recent results.
}
We only keep in \cref{table:dec:EFemptiness} the best known results as of the current state of the art.
For example, the decidability of the EF-emptiness problem over dense time with 1 parametric clock and arbitrarily many nonparametric clocks and integer-valued parameters as proved in~\cite{AHV93} with a non-elementary complexity does not appear in \cref{table:dec:EFemptiness} as it is subsumed by~\cite{BBLS15} with an NEXPTIME complexity and a more permissive syntax (use of invariants).

The open question of the syntax expressiveness requires to consider a multi-dimensional table: we need to consider not only the number of clocks and parameters, but also the syntax allowed in guards and invariants.
For example, %
for the undecidability over discrete time, \cite{BBLS15} improves the number of parameters when compared to~\cite{AHV93} (6 instead of~1), but requires both strict and non-strict inequalities, whereas \cite{AHV93} uses only equalities in their construction; it is therefore unclear whether the result of~\cite{AHV93} is really subsumed by~\cite{BBLS15}.
However, following \cref{remark:invariants}, we considered that the works requiring invariants to contain only clocks bounded from above impose in fact no constraint on the invariants form (as the clocks bounded from below can be moved to the incoming guards).

``Consequence'' indicates a result originally proved for a less expressive or a more expressive setting; ``at most'' in the complexity column indicates in the latter case that the complexity is necessarily lower or equal to that of the more expressive setting.
For example, \cite{Miller00} proved that the single clock case is PTIME over dense time with a fixed number of rational-valued parameters, and therefore the corresponding problem cannot be harder over discrete time (with integer-valued parameters).

In the following, we extract the most important results out of \cref{table:dec:EFemptiness}.
The decidability is clearly impacted by the number of parametric clocks, and we therefore reason by the number of parametric clocks.

\paragraph{Main results: 1 parametric clock}
	First, let us consider PTAs with a single parametric clock:
	The EF-emptiness problem is (at most) NP-complete over discrete and dense time with no nonparametric clock and arbitrarily many parameters~\cite{Miller00}.

	It is decidable and NEXPTIME-complete over discrete time with arbitrarily many nonparametric clocks~\cite{BBLS15}.
	\ea{Pourquoi NEXPTIME-complete ? Car \cite{BO14} montre NEXPTIME complete avec seulement inégalités larges ; or \cite{BBLS15} montre NEXPTIME avec une syntaxe potentiellement plus riche.}
	Over dense time with arbitrarily many nonparametric clocks and integer-valued parameters, it is NEXPTIME~\cite{BBLS15}.
	
	It is %
	undecidable with three nonparametric clocks~\cite{Miller00} over dense time with rational-valued parameters; note that this problem is decidable over discrete time~\cite{AHV93,BO14,BBLS15} and over dense time with integer-valued parameters~\cite{BBLS15}, which exhibits a difference between dense and discrete time~\cite{Miller00}, as well as between integer- and rational-valued parameters over dense time.

\paragraph{Main results: 2 parametric clocks}
	Second, let us consider PTAs with two parametric clocks:
	the EF-emptiness problem is decidable over discrete time with arbitrarily many nonparametric clocks and a single parameter, and is PSPACE$^\mathrm{NEXP}$-hard~\cite{BO14}\ea{pas sûr, en fait ``; this result is claimed in~\cite{BO14} to extend to dense time with integer-valued parameters''}.
	
	Over dense time with rational-valued parameters, the case with 2 parametric clocks and 2 nonparametric clocks is undecidable: \cite{Miller00} gives a proof of undecidability with 1 parametric clock and 3 nonparametric clocks: comparing one of the nonparametric clocks with a parameter in an additional location (\eg{} after the halting location) does not impact the proof and turns a nonparametric clock into a parametric one.
	
	Any other case with two parametric clocks remains open.

\paragraph{Main results: other undecidability}
The EF-emptiness problem is undecidable in all settings with three (or more) parametric clocks.

Finally, using only strict inequalities, the EF-emptiness problem is undecidable over dense time for two parametric clocks, three nonparametric clocks and two parameters~\cite{Doyen07}; this situation was not considered over discrete time.

\paragraph{Open cases}
The main open case is the ``two parametric clocks'' case.
The decidability is open for 2 parametric clocks with:
\begin{itemize}
	\item over discrete time: arbitrarily many nonparametric clocks and more than one parameter;
	\item over dense time with integer-valued parameters: arbitrarily many nonparametric clocks and parameters;
	\item over dense time with rational-valued parameters: 0 or 1 nonparametric clock and any number of parameters.
\end{itemize}

In addition, the decidability remains open over dense time with rational-valued parameters for 1 nonparametric clock, 1 or 2 nonparametric clocks and arbitrarily many parameters.

Finally, the decidability using only strict inequalities remain open for cases not considered by~\cite{Doyen07}: less clocks and parameters, or with integer-valued parameters (both over dense and discrete time).

\subsection{Language and trace preservation}

Let us first recall the definition of determinism from~\cite{AM15}.
We say that a PTA is \emph{deterministic} if, for any $\loc \in \Loc$, for any $\action \in \Actions$, there exists at most one edge $(\loc,\guard,\action,\resets,\loc') \in \Edges$, for some $\guard,\resets,\loc'$.
(Note that it differs from a rather common definition of determinism for TAs, that allows two or more outgoing transitions with the same action label provided that the corresponding guards are pairwise disjoint.)

The language- and trace-preservation problems are decidable for deterministic PTAs with a single (parametric) clock, and with linear parameter constraints allowed in guards and invariants, \ie{} of the form $\cellSyntaxJLRg$ or $\plterm \compOp 0$~\cite{AM15}.
A procedure to compute parameter valuations with the same trace set as a given valuation is proposed in~\cite{AM15} (close to the ``inverse method'' \cite{ACEF09}), that is complete for deterministic PTAs, and terminates in the case of a single clock.

\subsection{Parametric model checking}

Parametric model checking was addressed in different settings:
verifying a nonparametric model against a parametric formula,
or a parametric model against a nonparametric formula,
or a parametric model against a parametric formula.

\paragraph{Nonparametric model / parametric formula}
In~\cite{AETP01}, an extension of LTL with parameters in the formula (``PLTL'') is studied.
When only parametric ``always'' modalities are allowed of the form ``$\leq \param$'', checking emptiness of the valuation set is PSPACE-complete.
The solution to the synthesis problem is doubly exponential in the number of parameters.
However, when allowing equality in PLTL, the emptiness problem becomes undecidable~\cite{AETP01}.

\paragraph{Parametric model / nonparametric formula}
In~\cite{Quaas14}, it is shown that model checking PTAs with the (nonparametric) logic MTL~\cite{Koymans90} is undecidable, even with a single clock and a single parameter, and even when the PTAs is deterministic.
This negative result comes in contrast to the decidability of the EF-emptiness problem for one-clock PTAs, and to the decidability of MTL-model checking for (nonparametric) timed automata in the pointwise semantics over finite timed words~\cite{OW07}. %
Note that the proof of undecidability of~\cite{Quaas14} requires the parameters to be rational-valued (integer-valued parameters are not sufficient---and this latter case can hence be considered as open).

\paragraph{Parametric model / parametric formula}
Model checking a PTA over discrete time with a single parametric clock against a PTCTL formula (a parametric version of TCTL) is decidable, provided the formula does not use equality constraints; otherwise the problem becomes undecidable~\cite{BR07}.

\subsection{Other problems: open}

Other problems are open.
However, two constructions were recently proposed for the one parametric clock case, that may help solve most problems in this particular setting.
First, in~\cite{AM15}, we show that the parametric zone graph is finite for a single (parametric) clock and arbitrarily many rational-valued parameters over dense time.
This implies that all problems that reason on the zone graph can be decided.
This includes in particular EF-, EG-, AF and AG-emptiness, as well as the language and trace preservation problems.

Second, in~\cite{BBLS15}, an abstraction is proposed for one parametric clock and arbitrarily many nonparametric clocks and integer-valued parameters over dense time.
Although this remains to be shown formally, this abstraction (based on the elimination of the nonparametric clocks followed by a corner-point abstraction on the subsequent region graph) apparently preserves enough elements of the region graph to be used to solve all aforementioned problems.

In both cases, the synthesis seems also to be feasible.

\section[The (quite) disappointing class of L/U-PTAs]{The (quite) disappointing class of L/U-PTAs}\label{section:LU}

\reviewer{3}{Why are L/U PTAs disappointing? More problems are decidable for
L/U PTAs than for general PTAs. So you should rather call this section "The joy
of L/U PTAs"!}

Lower-bound/upper-bound parametric timed automata (L/U-PTAs), proposed in~\cite{HRSV02}, restrict the use of parameters in the model.
A parameter is said to be an \emph{upper-bound parameter} if, whenever it is compared with a clock, it is necessarily compared as an upper bound, \ie{} it only appears in inequalities of the form $\clock \compOpLeq \param$.
Conversely, a parameter is a \emph{lower-bound parameter} if it is only compared with clocks as a lower bound, \ie{} of the form $\param \compOpLeq \clock$.

An L/U-PTA is a PTA where the set of parameters is partitioned into upper-bound parameters and lower-bound parameters.
In~\cite{BlT09}, two additional subclasses are introduced: L-PTAs (resp.\ U-PTAs) are PTAs with only lower-bound (resp.\ upper-bound) parameters.

\begin{example}
	Consider again the coffee machine in \cref{fig:coffee}, modeled using a PTA~$\A$.
	This PTA is not an L/U-PTA; indeed, in the guard $\styleclock{x_2} = \styleparam{\param_2}$ (resp.\ $\styleclock{x_2} = \styleparam{\param_3}$), $\styleparam{\param_2}$ (resp.~$\styleparam{\param_3}$) is compared with clocks both as a lower-bound and as an upper-bound.
	(Recall that $=$ stands for $\leq$ and $\geq$.)
	
	However, if one replaces $\styleclock{x_2} = \styleparam{\param_2}$ with $\styleclock{x_2} \leq \styleparam{\param_2}$
	and one replaces $\styleclock{x_2} = \styleparam{\param_3}$ with $\styleclock{x_2} \leq \styleparam{\param_3}$, then $\A$ becomes an L/U-PTA with lower-bound parameter $\styleparam{\param_1}$ and upper-bound parameters $\{\styleparam{\param_2}, \styleparam{\param_3}\}$.
	Note that equalities are not forbidden in L/U-PTAs (\eg{} $\styleclock{x_1} = 10$), but only equalities involving parameters.
\end{example}

Several case studies fit into the class of L/U-PTAs: the root contention protocol, the bounded retransmission protocol and the Fischer mutual exclusion protocol are all modeled with L/U-PTAs in~\cite{HRSV02};
in \cite{HRSV02,KP12}, both the Fischer mutual exclusion protocol and a producer-consumer are verified using L/U-PTAs.
Interestingly, the two case studies of the seminal paper on PTAs~\cite{AHV93} (\viz{} a toy railroad crossing model and a model of Fischer mutual exclusion protocol) are also L/U-PTAs\LongVersion{, although the concept of L/U-PTAs had not been proposed yet at that time}.
In addition, most models of asynchronous circuits with bi-bounded delays (\ie{} where each delay between the change of an input signal and the change of the corresponding output is a parametric interval) can be modeled using L/U-PTAs.

L/U-PTAs were first known for their decidability results (\cref{ss:L/U:decidability});
then, new undecidability results (\cref{ss:L/U:undecidability,ss:L/U:open}) rendered this class less interesting.
The most disappointing aspect of L/U-PTAs is the impossibility to perform exact synthesis even when the associated decision problems are decidable.
We review these results in the remainder of this section.

\subsection{A main decidability result}\label{ss:L/U:decidability}

The first (and main) positive result for L/U-PTAs is the decidability of the EF-emptiness problem~\cite{HRSV02}.
L/U-PTAs benefit from the following interesting monotonicity property: increasing the value of an upper-bound parameter or decreasing the value of a lower-bound parameter necessarily relaxes the guards and invariants, and hence can only add behaviors.
Hence, checking the EF-emptiness of an L/U-PTA can be achieved by replacing all lower-bound parameters with~0, and all upper-bound parameters with $\infty$; this yields a nonparametric TA, for which emptiness is PSPACE~\cite{AD94}.
This procedure is not only sound but also complete.

\ea{TODO: ajouter le truc entièrement paramétré}

Further decidability results are exhibited in~\cite{BlT09}, for infinite runs acceptance properties, \ie{} where a location is met infinitely often (a problem to which we refer hereafter as \BuchiEF{}).
Note that, in contrast to~\cite{HRSV02} where the parameters are valued with non-negative reals, the results in~\cite{BlT09} consider integer-valued parameters (though time is dense, \ie{} clocks are real-valued).
It is shown in~\cite{BlT09} that \BuchiEF{}-emptiness, \BuchiEF{}-universality, and \BuchiEF{}-finiteness are PSPACE-complete.
Remark that the decidability of the \BuchiEF{}-finiteness is due to the fact that the parameters are integer-valued; in short, a sufficient bound is computed on the parameters, and then valuations smaller or equal to this bound are enumerated, which would not be feasible for real- or rational-valued parameters.\ea{note: only for L-PTAs; if a general L/U-PTA, emptiness is equivalent to finiteness}

Oddly, the decidability of EF-universality was never shown for L/U-PTAs.
On the one hand, EF-emptiness is decidable for L/U-PTAs with rational-valued parameters~\cite{HRSV02}.
On the other hand, \BuchiEF{}-universality is decidable for L/U-PTAs with integer-valued parameters~\cite{BlT09}, and this result extends in a very straightforward manner to EF-universality for L/U-PTAs with integer-valued parameters.
Let us first extend \BuchiEF{}-universality to rational-valued parameters.
The result mainly consists in a reasoning dual to~\cite[Lemma~2]{ALime17}.

\begin{proposition}\label{proposition:BuchiEF-universality}%
	The \BuchiEF{}-universality problem is PSPACE-complete for L/U-PTAs with rational-valued parameters.
\end{proposition}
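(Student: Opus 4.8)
The plan is to reduce \BuchiEF{}-universality over rational-valued parameters to \BuchiEF{}-universality over integer-valued parameters, which is PSPACE-complete by~\cite{BlT09}, via the identity transformation on the L/U-PTA; the only ingredient is the monotonicity of L/U-PTAs.

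First I would set up the partial order on parameter valuations: write $\pval \preceq \pval'$ whenever $\pval(\param) \leq \pval'(\param)$ for every upper-bound parameter~$\param$ and $\pval(\param) \geq \pval'(\param)$ for every lower-bound parameter~$\param$. By the monotonicity property of L/U-PTAs (increasing an upper-bound parameter or decreasing a lower-bound parameter only relaxes the guards and invariants), passing from~$\pval$ to~$\pval'$ with $\pval \preceq \pval'$ only weakens every guard and every invariant of~$\A$; hence every concrete state and every (delay or discrete) transition of $\valuate{\A}{\pval}$ remains one of $\valuate{\A}{\pval'}$, so any infinite run of $\valuate{\A}{\pval}$ is, verbatim, an infinite run of $\valuate{\A}{\pval'}$ with the same sequence of locations. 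In particular, if $\valuate{\A}{\pval}$ admits an infinite run visiting~$\somelocs$ infinitely often, so does $\valuate{\A}{\pval'}$; that is, the set of valuations satisfying \BuchiEF{} is upward-closed for~$\preceq$.

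Next I would prove the key equivalence: for every L/U-PTA~$\A$, every valuation in $\grandqplus^\Param$ satisfies \BuchiEF{} if and only if every valuation in $\grandn^\Param$ does. The ``only if'' direction is trivial since $\grandn \subseteq \grandqplus$. For the converse, assume all integer valuations satisfy \BuchiEF{} and fix an arbitrary $\pval \in \grandqplus^\Param$; let $\pval^\flat$ assign $\floor{\pval(\param)}$ to each upper-bound parameter and $\ceil{\pval(\param)}$ to each lower-bound parameter. Then $\pval^\flat \in \grandn^\Param$ and $\pval^\flat \preceq \pval$, so by hypothesis $\valuate{\A}{\pval^\flat}$ satisfies \BuchiEF{}, and by monotonicity $\valuate{\A}{\pval}$ does too; as $\pval$ was arbitrary, \BuchiEF{}-universality holds over $\grandqplus$. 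This is exactly dual to the rounding argument of~\cite[Lemma~2]{ALime17}, which rounds in the opposite direction to treat an emptiness problem.

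Finally I would conclude: the equivalence shows that \BuchiEF{}-universality for L/U-PTAs over $\grandqplus$ has precisely the same positive instances as over $\grandn$, so it is decided by the PSPACE procedure of~\cite{BlT09} (membership in PSPACE), while PSPACE-hardness already holds for parameter-free timed automata, where the problem degenerates to asking whether a timed automaton has an infinite run visiting an accepting location infinitely often---a PSPACE-complete problem~\cite{AD94}. I expect the only delicate point to be making the monotonicity step fully rigorous, namely checking that flooring upper bounds and ceiling lower bounds yields a non-negative valuation $\pval^\flat$ with $\pval^\flat \preceq \pval$, and that $\preceq$-smaller valuations really do add no behaviour; both are routine consequences of the L/U restriction on where parameters may occur.
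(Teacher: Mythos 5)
Your proof is correct and follows essentially the same route as the paper's: the identical rounding of a rational valuation (floor the upper-bound parameters, ceil the lower-bound ones) combined with the monotonicity of L/U-PTAs to show that \BuchiEF{}-universality over $\grandqplus$ coincides with \BuchiEF{}-universality over $\grandn$, which is then settled by the PSPACE-complete result of~\cite{BlT09}; the paper merely phrases the equivalence contrapositively (non-universality over rationals iff over integers). Your explicit PSPACE-hardness remark via parameter-free timed Büchi automata is a harmless addition that the paper leaves implicit in invoking PSPACE-completeness of the integer case.
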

\begin{proof}
	We aim at proving that, given an L/U-PTA $\A$ and a subset of its locations~$\somelocs$, the problem of the universality of the set of parameter valuations~$\pval$ such that $\pval(\A)$ has a run passing infinitely often through~$\somelocs$ is PSPACE-complete.
	Let us prove that the set of rational valuations satisfying the property is not universal iff the set of integer valuations doing so is not universal.
	
	\begin{itemize}
		\item[$\Leftarrow$] Considering that integer valuations are also rational valuations, the result trivially holds.
		\item[$\Rightarrow$]
			Assume there exists a rational-valued parameter valuation~$\pval$ for which $\valuate{\A}{\pval}$ contains \emph{no} infinite run passing infinitely often through locations of~$\somelocs$.
		Let $\pval'$ be the integer parameter valuation obtained from~$\pval$ as follows:
		\[\pval'(\param) = 
			\begin{cases}
			\pval(\param) & \text{if }\pval(\param) \in \grandn \\
			\floor{\pval(\param)} & \text{if }\param\text{ is an upper-bound parameter}\\
			\ceil{\pval(\param)} & \text{if }\param\text{ is a lower-bound parameter}\\
			\end{cases}
		\]
		
		That is, $\pval'$ is more restrictive than~$\param$, and less guards will be enabled, and therefore less behaviors will be possible in~$\valuate{\A}{\pval}$.
		Formally, from the well-known monotonicity property of L/U-PTAs (recalled in \eg{} \cite[Lemma~1]{ALime17}), if $\valuate{\A}{\pval}$ yields no infinite run passing infinitely often through locations of~$\somelocs$, then neither does $\valuate{\A}{\pval'}$.
	\end{itemize}
	Now, in~\cite[Theorem~8]{BlT09}, it is proved that the problem of the universality of the set of integer parameter valuations for which there exists an infinite run passing infinitely often through~$\somelocs$ is PSPACE-complete.
	This concludes the proof.
\end{proof}

This result extends trivially to EF-universality (by adding self-loops with no guard on all accepting locations).

\begin{corollary}\label{corollary:EF-universality} %
	The EF-universality problem is PSPACE-complete for L/U-PTAs with rational-valued parameters.
\end{corollary}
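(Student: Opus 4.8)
The plan is to reduce the EF-universality problem to the \BuchiEF{}-universality problem, which was just shown to be PSPACE-complete for L/U-PTAs with rational-valued parameters in \cref{proposition:BuchiEF-universality}. Given an L/U-PTA~$\A$ and a subset~$\somelocs$ of its locations, I would build, in linear time, an L/U-PTA~$\A'$ obtained from~$\A$ by adding, for each location $\loc \in \somelocs$, a single self-loop edge $(\loc,\guard,\action,\resets,\loc)$ whose guard~$\guard$ is trivially true and whose reset set is $\resets = \emptyset$ (the action~$\action$ being arbitrary). Since the new edges contain no parametric constraint, the partition of the parameters of~$\A$ into lower- and upper-bound parameters carries over verbatim, so~$\A'$ is again an L/U-PTA; and exactly $|\somelocs|$ edges are added, so the construction is polynomial.

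I would then prove that, for every rational parameter valuation~$\pval$, the TA~$\valuate{\A'}{\pval}$ admits a run passing infinitely often through~$\somelocs$ if and only if~$\valuate{\A}{\pval}$ reaches some location of~$\somelocs$. For the forward direction, such an infinite run visits some $\loc \in \somelocs$, and its prefix up to the \emph{first} visit of a location of~$\somelocs$ uses only edges of~$\A$ (a fresh self-loop can only be taken from a location already in~$\somelocs$, hence not strictly before that first visit), so this prefix already witnesses reachability of~$\somelocs$ in~$\valuate{\A}{\pval}$. For the backward direction, if~$\valuate{\A}{\pval}$ reaches some $(\loc,\clockval)$ with $\loc \in \somelocs$, the same finite run is a run of~$\valuate{\A'}{\pval}$, and it can be prolonged into an infinite run by repeatedly firing the fresh self-loop at~$\loc$ with zero delay; this run stays in~$\loc \in \somelocs$ and therefore passes infinitely often through~$\somelocs$. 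Hence the set of valuations~$\pval$ such that~$\valuate{\A}{\pval}$ reaches~$\somelocs$ coincides with the set of valuations~$\pval$ such that~$\valuate{\A'}{\pval}$ has a run passing infinitely often through~$\somelocs$, so the EF-universality instance $(\A,\somelocs)$ and the \BuchiEF{}-universality instance $(\A',\somelocs)$ have the same answer; membership in PSPACE then follows from \cref{proposition:BuchiEF-universality}.

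For PSPACE-hardness I would observe that a nonparametric TA is a degenerate L/U-PTA (with an empty parameter set), for which EF-universality over the unique empty valuation is precisely reachability of a location, a PSPACE-complete problem~\cite{AD94,ACD93}; hence EF-universality is PSPACE-hard, and combining both bounds yields PSPACE-completeness. I do not anticipate any real difficulty here: the only point needing a short check is that the self-loop gadget neither adds spurious ways of reaching~$\somelocs$ nor breaks the L/U partition, which is immediate; the genuine work is already contained in \cref{proposition:BuchiEF-universality}.
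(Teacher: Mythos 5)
Your proposal is correct and takes essentially the same route as the paper, which justifies the corollary precisely by adding guard-free self-loops on the goal locations to reduce EF-universality to \BuchiEF{}-universality (\cref{proposition:BuchiEF-universality}); you merely spell out the correctness of the reduction, the preservation of the L/U structure, and the hardness part, all of which the paper leaves implicit.
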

\subsection{Undecidability results}\label{ss:L/U:undecidability}

The first undecidability results for L/U-PTAs are shown in~\cite{BlT09}:
the \emph{constrained} \BuchiEF{}-emptiness problem and constrained \BuchiEF{}-universality problem are undecidable for L/U-PTAs.
By constrained it is meant that some parameters of the L/U-PTA can be constrained by an initial linear constraint, \eg{} $\param_1 \leq 2 \times \param_2 + \param_3$.
Indeed, using linear constraints, one can constrain an upper-bound parameter to be equal to a lower-bound parameter, and hence build a 2-counter machine using an L/U-PTA.
However, when no upper-bound parameter is compared to a lower-bound parameter (\ie{} when no initial linear inequality contains both an upper-bound and a lower-bound parameter), these two problems retrieve decidability~\cite{BlT09}.
The exact decidability frontier may not be found yet: the case where a lower-bound parameter is constrained to be less than or equal to an upper-bound parameter fits in none of the considered cases.

A second negative result is shown in~\cite{JLR15}: the AF-emptiness problem is undecidable for L/U-PTAs.
This is achieved by a reduction from a 2-counter machine where a lower-bound parameter is equal to an upper-bound parameter iff AF holds.
This restricts again the use of L/U-PTAs, as AF is essential to show that all possible runs of a system eventually reach a (good) state.

Then, in \cite{AM15}, it is shown that the language preservation problem is undecidable for L/U-PTAs.
Again, this is achieved by a reduction from a 2-counter machine where a lower-bound parameter is equal to an upper-bound parameter iff the language is preserved.

\subsection{A frontier between decidability and undecidability}\label{ss:L/U:open}

The EG-emptiness problem stands at the frontier between decidability and undecidability~\cite{ALime17}.
Recall that the EG-emptiness problem is false if there exists at least one parameter valuation for which a maximal run remains entirely within some predefined set~$\somelocs$ of locations.
That is, either this run is an infinite run, and therefore contains a cycle (remaining within~$\somelocs$);
or this run is a finite run (remaining within~$\somelocs$), and therefore ends with a deadlock, \ie{} ends with a state from which no discrete transition can be taken, even after letting some time elapse.

On the one hand, deciding whether there exists a valuation in an L/U-PTA yielding a cycle is decidable (and PSPACE-complete).
On the other hand, deciding whether there exists a valuation in an L/U-PTA yielding a deadlock is undecidable.
(These two problems, not studied in this survey, are without surprise shown to be undecidable for general PTAs.)

The EG-emptiness problem stands in between decidability and undecidability:
while this problem is decidable for L/U-PTAs with a bounded parameter domain with closed bounds, it becomes undecidable if either the assumption of boundedness or of closed bounds is lifted.

\subsection{Model-checking L/U-PTAs}

In~\cite{BlT09}, a parametric extension of the dense-time linear temporal logic \mitlzeroinf{} (denoted ``\pmitlzeroinf{}'') is proposed; when parameters are used only as lower or upper bound in the formula (to which we refer as L/U-\pmitlzeroinf{}), satisfiability and model checking are PSPACE-complete; this is obtained by translating the formula into an L/U-PTA and checking an infinite acceptance property.

Then, in~\cite{GLN15}, an extension of MITL allowing parametric linear expressions in bounds is proposed (yielding PMITL).
Two sets of (integer-valued) parameter valuations are considered: 
\begin{inparaenum}[\itshape 1\upshape)]
	\item the set of valuations for which a PMITL formula is satisfiable, \ie{} for which there exists a timed sequence (possibly belonging to a given L/U-PTA) satisfying it, and
	\item the set of valuations for which a PMITL formula is valid, \ie{} for which all timed sequences (possibly belonging to a given L/U-PTA) satisfy it.
\end{inparaenum}
Under some assumptions, the emptiness and universality of the valuation set for which a PMITL property is satisfiable or valid (possibly \wrt{} a given L/U-PTA) are decidable, and EXPSPACE-complete.
Essential assumptions for decidability include the fact that parameters should be used with the same polarity (positive or negative coefficient, as lower or upper bound in the intervals) within the entire PMITL formula, and each interval can only use parameters in one of the endpoints.
Additional assumptions include that no interval of the PMITL formula should be punctual (nor empty), and linear parametric expressions are only used in right endpoints of the intervals (single parameters can still be used as left endpoints).
In addition, two fragments of PMITL are showed to be in PSPACE, including one that allows for expressing parameterized response (``if an event occurs, then another event shall occur within some possibly parametric time interval'').

\subsection{Summary of decidability problems for L/U-PTAs}
We summarize in \cref{table:L/U} decision problems for L/U-PTAs.
Cases not considered in the literature %
are not depicted.

\begin{table*}%

	\centering
	\begin{tabular}{| c | c | c | c |}
		\hline
		\cellHeader{Problem} & \cellHeader{\DomainP{}} & \cellHeader{Complexity} & \cellHeader{Main ref.}\\
		\hline
		EF-emptiness & \cellQ{} & \cellDec{}PSPACE-complete & \cite{HRSV02}\\
		\hline
		AG-emptiness & \cellQ{} & \cellDec{}PSPACE-complete & \cref{corollary:EF-universality}\\
		\hline
		AF-emptiness & \cellQ{} & \cellUndec{}undecidable & \cite{JLR15}\\
		\hline
		\hline
		Cycle-existence-emptiness & \cellQ{} & \cellDec{}decidable & \cite{ALime17}\\
		\hline
		Deadlock-existence-emptiness & \cellQ{} & \cellUndec{}undecidable & \cite{ALime17}\\
		\hline
		EG-emptiness (closed bounded) & \cellQ{} & \cellDec{}decidable & \cite{ALime17}\\
		\hline
		EG-emptiness (general) & \cellQ{} & \cellUndec{}undecidable & \cite{ALime17}\\
		\hline
		\hline
		\BuchiEF{}-emptiness & \cellQ{} & \cellDec{}PSPACE-complete & \cite{ALime17}\\
		\hline
		\BuchiEF{}-universality & \cellQ{} & \cellDec{}PSPACE-complete & \cref{proposition:BuchiEF-universality}\\
		\hline
		\BuchiEF{}-finiteness & \cellN{} & \cellDec{}PSPACE-complete & \cite{BlT09}\\
		\hline
		constrained \BuchiEF{}-emptiness & \cellN{} & \cellUndec{}undecidable & \cite{BlT09}\\
		\hline
		constrained \BuchiEF{}-universality & \cellN{} & \cellUndec{}undecidable & \cite{BlT09}\\
		\hline
		L/U-constrained \BuchiEF{}-emptiness & \cellN{} & \cellDec{}PSPACE-complete & \cite{BlT09}\\
		\hline
		L/U-constrained \BuchiEF{}-universality & \cellN{} & \cellDec{}PSPACE-complete & \cite{BlT09}\\
		\hline
		\hline
		Language preservation & \cellN{} & \cellUndec{}undecidable & \cite{AM15}\\
		\hline
		Language preservation & \cellQ{} & \cellUndec{}undecidable & \cite{AM15}\\
		\hline
		\hline
		L/U-\pmitlzeroinf{}-emptiness & \cellN{} & \cellDec{}PSPACE-complete & \cite{BlT09}\\
		\hline
		L/U-\pmitlzeroinf{}-universality & \cellN{} & \cellDec{}PSPACE-complete & \cite{BlT09}\\
		\hline
		PMITL model-checking & \cellN{} & \cellDec{}EXPSPACE-complete & \cite{GLN15}\\
		\hline
	\end{tabular}

	\caption{Decision problems for L/U-PTAs over dense time}
	\label{table:L/U}
\end{table*}
\subsection{Intractability of the synthesis}

The most disappointing result concerning L/U-PTAs is shown in~\cite{JLR15}:
despite decidability of the underlying decision problem (EF-emptiness),
the solution to the EF-synthesis problem for L/U-PTAs cannot be represented using a formalism for which the emptiness of the intersection with equality constraints is decidable.
The proof relies on the undecidability of the constrained emptiness problem of~\cite{BlT09}.
A very annoying consequence is that such a solution cannot be represented as a finite union of polyhedra (since the emptiness of the intersection with equality constraints is decidable).

\subsection{Two open classes: L-PTAs and U-PTAs}

L-PTAs and U-PTAs (introduced in~\cite{BlT09}) are very open classes, in the sense that %
to the best of our knowledge,
	no result known to be decidable for L-PTAs (or U-PTAs) was shown undecidable for L/U-PTAs (and is hence either decidable or open).
Conversely, and even stronger, no result known to be undecidable for L/U-PTAs was shown decidable for L-PTAs (or U-PTAs)---and remains open.

To summarize, the EG-emptiness, AG-emptiness and AF-emptiness problems, as well as the language- and trace-preservation problems, are all undecidable for (general) L/U-PTAs, but remain open for L-PTAs and U-PTAs.

\ifdefined\LaTeXdiff
\else
\ea{
\subsubsection{Decision problems}

\ea{Dans \cite{BlT09}, les résultats suivants s'étendent trivialement à R :
\begin{itemize}
	\item Proposition 3 For lower bound automata A, the set $\Gamma(A)$ is downward-closed.
	\item Theorem2 ForlowerboundautomataA,checkingemptinessof $\Gamma(A)$ isPSPACE-complete and can be done in time
	$O(|\Delta| \cdot (2 c_\A +2)^{2|\Clock|^2} )$.
\end{itemize}

}

\paragraph{\BuchiEF{}-emptiness for L-PTAs}
This is decided by checking whether the TA valuated with 0 is empty or not.
This can be lifted immediately to~\grandrplus{}.
As for \grandrplus{}, it can be decided by checking whether the TA valuated with $-\infty$ is empty or not, which leaves the complexity unchanged.
\ea{Note: we should still prove that the fact that $\Gamma(\A)$ is downward-closed in L-PTAs can be lifted to $\grandrplus{}$, but this looks obvious. (Yes, it is.)}

\paragraph{\BuchiEF{}-universality for L-PTAs}

\paragraph{\BuchiEF{}-emptiness for U-PTAs}

\paragraph{\BuchiEF{}-universality for U-PTAs}
This is decided by checking whether the TA valuated with 0 is empty or not.
This can be lifted immediately to~\grandrplus{}.
As for \grandr{}, it can be decided by checking whether the TA valuated with $-\infty$ is empty or not, which leaves the complexity unchanged.
\ea{Note: we should prove this time that the fact that $\Gamma(\A)$ is upward-closed in U-PTAs can be lifted to $\grandrplus{}$, but this looks obvious. (Yes, it is.)}

\begin{table*}
	\centering
	\begin{tabular}{| c | c | c | c | c |}
		\hline
		\cellHeader{Problem} & \cellHeader{\DomainP{}} & \cellHeader{L/U-PTAs} & \cellHeader{L-PTAs \& U-PTAs} & \cellHeader{Main ref.}\\
		\hline
		\BuchiEF{}-emptiness & \cellN{} &  \cellDec{}PSPACE-complete & \cellDec{}PSPACE-complete & \cite{BlT09}\\
		\hline
		\BuchiEF{}-universality & \cellN{} &  \cellDec{}PSPACE-complete & \cellDec{}PSPACE-complete & \cite{BlT09}\\
		\hline
		\BuchiEF{}-finiteness & \cellN{} &  \cellDec{}PSPACE-complete & \cellDec{}PSPACE-complete & \cite{BlT09}\\
		\hline
		Constrained-\BuchiEF{}-finiteness & \cellN{} &  \cellUndec{}undecidable & \cellDec{}PSPACE-complete & \cite{BlT09}\\
		\hline
		Constrained-\BuchiEF{}-universality & \cellN{} &  \cellUndec{}undecidable & \cellDec{}PSPACE-complete & \cite{BlT09}\\
		\hline
		AF-emptiness & \cellN{} & \cellUndec{}undecidable & \cellOpen{} & \cellOpenref{} \\
		\hline
		Any except EF-emptiness & \cellR{} & (depends) & \cellOpen{} & \cellOpenref{} \\
		\hline
	\end{tabular}

	\caption{Decision problems for L-PTAs and U-PTAs}
\end{table*}
\begin{table*}
	\centering
	\begin{tabular}{| c | c | c | c |}
		\hline
		\cellHeader{Problem} & \cellHeader{Over \grandn{}} & \cellHeader{Over \grandrplus{}} & \cellHeader{Over \grandr{}} \\
		\hline
		\BuchiEF{}-emptiness for L-PTAs & \cellDec{}PSPACE-complete & \cellDec{}PSPACE-complete & \cellDec{}PSPACE-complete\\
		\hline
		\BuchiEF{}-universality for L-PTAs & \cellDec{}PSPACE-complete & ? & ? \\
		\hline
		\BuchiEF{}-emptiness for U-PTAs & \cellDec{}PSPACE-complete & ? & ? \\
		\hline
		\BuchiEF{}-universality for U-PTAs & \cellDec{}PSPACE-complete & \cellDec{}PSPACE-complete & \cellDec{}PSPACE-complete\\
		\hline
	\end{tabular}

	\caption{Extension of \cite{BlT09} to \grandrplus{} and \grandr{}}
\end{table*}

}

\fi
\paragraph{Synthesis}
The synthesis for L-PTAs and U-PTAs did not receive much attention, with the exception of integer-valued parameters: in that case, it is possible to synthesize the solution to the \BuchiEF{}-synthesis problem in the form of a union of linear constraints doubly exponential in the number of parameters~\cite{BlT09}.
The authors note that it remains open whether one can construct a linear constraint with a single exponential blow-up.
This result does not extend in a straightforward manner to rational-valued parameters, as the technique in~\cite{BlT09} (for U-PTAs) requires the computation of a sufficient upper bound, and then an exhaustive enumeration of parameters below this bound.

\section[Beyond parametric timed automata]{Beyond parametric timed automata}\label{section:beyond}
\subsection{Parametric hybrid automata}

Hybrid automata~\cite{ACHH92,ACHHNHOSY95,Henzinger96} are an extension of timed automata where clocks (called continuous variables) can have an arbitrary rate (\ie{} non-necessarily equal to~1).

The reachability of a location in linear hybrid automata is undecidable, although semi-algorithms were proposed~\cite{ACHHNHOSY95}.
Interestingly, the simple extension of timed automata to stopwatch automata (where the elapsing of some clocks may be stopped in selected locations) yields a formalism as expressive as linear hybrid automata~\cite{CL00}, and for which reachability is undecidable too.

First, remark that parameters can be encoded naturally in the general class of hybrid automata, provided diagonal constraints are allowed (of the form $v_i - v_j \compOp c$, with $v_i,v_j$ variables and $c$ a constant): a parameter is a variable that is not initialized (its initial value is arbitrary), the rate of which is always~0 (therefore constant), and that is never reset along any transition.
However, the undecidability results for linear hybrid automata rule out the possibility of exhibiting any decidability results for (general) parametric linear hybrid automata.

Second, several subclasses of linear hybrid automata were defined in the literature, and were shown to enjoy some decidable results (\eg{} \cite{HKPV98,BMRT04,AMPS12,BHS12,BDGORW13}).
However, obviously, any such subclass at least expressive as timed automata (such as \cite{HKPV98,BDGORW13}) would necessarily lead to undecidability when adding parameters.
This is not the case of some of subclasses of linear hybrid automata, which are incomparable (at least from a syntactic point of view) with timed automata (\eg{} \cite{BMRT04,BHS12}), or restrict the use and the number of variables~\cite{AMPS12}.
We believe studying parametric extensions of these formalisms represent an interesting direction of research.

\subsection{Parametric interrupt timed automata}

Interrupt timed automata (ITAs) are a subclass of hybrid systems where clock variables only have a rate of~0 (stopped) or~1 (processing): in fact, ITAs define levels such that, at each level, exactly one clock is active (rate~1), while clocks of lower levels are stopped (rate~0)~\cite{BHS12}.
In addition, guards can only involve clocks from the current level and the lower levels.
Clock updates allow the use of linear expressions involving clocks from lower levels.
The model is well-suited to define real-time systems with multiple tasks running on a single processor and subject to interruption (where a lower-priority task can be preempted by a higher-priority task).
A main positive result for ITAs is that reachability is in NEXPTIME (and in PTIME when the number of clocks is fixed).
Interrupt timed automata and timed automata are incomparable in terms of timed language.

In~\cite{BHJL16}, ITAs are extended with parameters, which yields parametric ITAs (PITAs).
When parameters are combined with clock values in linear expressions as additive coefficients, the reachability in PITAs reduces to the same problem in nonparametric ITAs, and is therefore decidable (with an upper bound of 2EXPTIME on the complexity, due to the reduction).
When parameters are combined with clock values in linear expressions as both additive and multiplicative coefficients, the reachability in PITAs remains decidable, with an upper bound of 2EXPSPACE on the complexity.
This significantly increases the expressiveness of ITAs, and allows to model clock drifts.

Finally, ITAs are extended to polynomial ITAs (PolITAs) in~\cite{BHPSS15}, where polynomial expressions on clocks are allowed in guards.
Reachability remains decidable, and parameters can be used (without harming the complexity) in polynomials.

\subsection{Integer-point parametric timed automata}\label{ss:IPPTA}

Integer-point parametric timed automata (IP-PTAs) were introduced in~\cite{ALR16ICFEM} as a subclass of PTAs in which each state in the parametric zone graph (a construction with location and symbolic convex constraints over $\Clock \cup \Param$) contains an integer point.
The main positive result for IP-PTAs with bounded (rational-valued) parameters is the decidability of the EF-emptiness problem.
However, the AF-emptiness and AG-emptiness problems are both undecidable.

A more disappointing result is the undecidability of the membership problem, \ie{} it is undecidable whether a PTA is an IP-PTA.
In addition, synthesis is proved to be intractable (as for L/U-PTAs).

However, a sufficient syntactic condition for the membership of IP-PTAs is the class of \emph{reset-PTAs}~\cite{ALR16ICFEM}:
	whenever a clock is compared to a parameter in a transition guard (resp.\ in location invariant), then all clocks must be reset on that transition (resp.\ along all transitions going out from that location).
As a consequence, the EF-emptiness problem is decidable for bounded reset-PTAs too.
In addition, we conjecture that the parametric zone graph of reset-PTAs should be finite, which would allow to prove the decidability of the AF, AG and EG-emptiness problems.
This remains to be shown formally.

\subsection{Other formalisms}
\paragraph{Time Petri nets}
In parallel to timed automata, many works in the literature were dedicated to time Petri nets~\cite{Merlin74}, which are an extension of Petri nets where transitions are labeled with a firing interval, which represents the duration between the time when the transition becomes enabled (enough tokens are present in the incoming places) and the time it can actually fire.
Time Petri nets and timed automata were compared in, \eg{} \cite{BCHLR05,Srba08,BCHLR13}.

In~\cite{TLR09}, time Petri nets are extended with rational-valued parameters in firing intervals.
Using translations between time Petri nets and timed automata~\cite{BCHLR05,CR06}, it is shown that the emptiness and reachability problems are undecidable for bounded parametric time Petri nets, and turn decidable when parameters are only used as lower-bounds or upper-bounds, in the spirit of L/U-PTAs.
Then, semi-algorithms are defined for the parametric model-checking of a subset of parametric TCTL formulas applied on parametric time Petri nets extended with inhibitor arcs (which play a similar role as stopwatches in timed automata).
The tool \romeo{} implements these algorithms.

\paragraph{Stateful timed CSP}
In~\cite{ALSD12}, the process algebra stateful timed CSP~\cite{SLDLSA13} (itself an extension of Hoare's communicating sequential processes~\cite{Hoare78}) is extended with parameters in syntactic constructs such as \texttt{Wait}, \texttt{Deadline} or \texttt{Within}, yielding PSTCSP.
Without surprise (as the expressiveness of PSTCSP is very close to that of PTAs), the emptiness of the valuation set for which a configuration is reachable is undecidable.
Although most of the (timed) syntactic constructs allowed are not necessary for the proof of undecidability, the \texttt{Wait} construct (used to test an exact amount of time, similar to equality in timed automata) is extensively used.
Decidability for subsets of the syntax without the \texttt{Wait} construct was not studied.
PSTCSP is implemented in a tool \PSyHCoS{}~\cite{ALSDL13} implementing some parameter synthesis algorithms.

\section[Tools and applications]{Tools and applications}\label{section:tools}
\subsection{Tools}

The first tool to support modeling and verification using parametric timed automata was \hytech{}~\cite{HHW97}.
In fact, \hytech{} supports linear hybrid automata (including clocks, parameters, stopwatches and general continuous variables); it can compute the state space, and perform operations (such as intersection, convex hull, difference) between sets of symbolic states.
Therefore, it can be used to perform parametric model checking using reachability checking~\cite{ABBL03}.
\hytech{} is not maintained anymore, but can still be found online in the form of a standalone binary for Linux.\footnote{%
	\url{https://embedded.eecs.berkeley.edu/research/hytech/}
}

In~\cite{HRSV02}, an extension of \uppaal{} implementing parametric difference bound matrices (PDBMs) and hence allowing for verification using PTAs is mentioned.
However, this tool does not seem to be available anywhere online.

\romeo{}~\cite{LRST09} primarily supports parametric time Petri nets (extended with stopwatches), a formalism shown to be close to PTAs in terms of expressiveness~\cite{BCHLR05,TLR09}.
\romeo{} supports the use of parametric linear expressions in the time intervals of the transitions, and allows to add linear constraints on the parameters to restrict their domain.
\romeo{} also implements an original algorithm for integer parameter synthesis using a symbolic (continuous) representation~\cite{JLR15}.
In addition, \romeo{} provides a simulator and an integrated model-checker supporting a subset of the TCTL syntax (including EF-synthesis and AF-synthesis).
\romeo{} is mainly written in C++, and makes use of the Parma Polyhedra Library~\cite{BHZ08}.

\imitator{}~\cite{AFKS12} is a software tool for parametric verification and robustness analysis of PTAs augmented with integer variables and stopwatches.
Parameters can be used both in the model and in the properties.
Verification capabilities include EF-synthesis, deadlock-freeness-synthesis~\cite{Andre16}, non-Zeno model checking~\cite{ANPS17}, and trace-preservation-synthesis.
\imitator{} is fully written in OCaml, and makes use of the Parma Polyhedra Library~\cite{BHZ08}.
It also features distributed capabilities to run over a cluster.

\subsection{Applications}

The formalism of PTAs has been used to model and verify various case studies featuring real-time constraints and parameters.

Beyond the usual academic examples (such as variants of train controllers~\cite{AHV93,HRSV02}),
PTAs were also used to successfully specify and verify numerous interesting case studies such as
the root contention protocol~\cite{HRSV02},
Philip's bounded retransmission protocol~\cite{HRSV02},
a 4-phase handshake protocol~\cite{KP12},
the alternating bit protocol~\cite{JLR15},
an asynchronous circuit commercialized by ST-Microelectronics~\cite{CEFX09},
(non-preemptive) schedulability problems~\cite{JLR15},
a distributed prospective architecture for the flight control system of the next generation of spacecrafts designed at ASTRIUM Space Transportation~\cite{FLMS12},
an unmanned aerial video system by Thales\footnote{%
	\url{https://www.imitator.fr/static/FMTV15/}
}, %
and even analysis of music scores~\cite{FJ13}.
\section[Open questions and perspectives]{Open questions and perspectives}\label{section:open}

\paragraph{Syntax and expressiveness}
A first perspective is to compare the expressiveness of the various syntaxes of guards and invariants for general PTAs used in the literature.
The definitions of expressiveness recently proposed in~\cite{ALR16ICFEM} could be reused for that purpose, using either untimed or timed languages.
Comparing the expressiveness of the syntaxes in the literature would reduce the number of dimensions for the various decidability results of the EF-emptiness problem studied in \cref{table:dec:EFemptiness}.

\paragraph{Open decision problems}
A main open problem is the decidability of PTAs with two parametric clocks, that was only studied with a single integer-parameter~\cite{BO14}.
Studying further the EG-, AF- and AG-emptiness problems for few clocks and parameters (as it was quite extensively done for EF-emptiness) remains to be done too, although the practical interest may be somehow debatable.

In addition, with the exception of~\cite{AM15}, all proofs of undecidability in the literature use a bounded number of clocks and parameters, but an unbounded number of locations.
Exhibiting a minimal number of locations (at the possible cost of an unbounded number of variables) may be of theoretical interest.
\ifdefined\LaTeXdiff
\else
\ea{
what with negative parameters?

major open problems:
\begin{enumerate}
	\item open PTAs with integer parameters; est-ce que je ne me ferais pas ça tout seul ? ;-) (mais pas évident du tout pour les horloges… ; en fait facile pour N/N, mais pas pour Q/N)
\end{enumerate}

}
\fi

More interesting (and promising) are the two open classes of L-PTAs and U-PTAs.
These classes are non-trivial, and relate to the robust analysis of TAs: most robustness problems (see~\cite{Markey11,BMS13}) consider an enlargement of all guards by (usually) the same constant factor, whereas U-PTAs allow to enlarge or decrease \emph{some} of the upper-bound guards by a possibly different rational-valued parameter, which gives an orthogonal definition of robustness.
The language preservation problem remains open for U-PTAs~\cite{AM15} (except in the case of a single integer-valued parameter where it becomes decidable), and the question of the synthesis is also challenging.

\paragraph{Hidden decidable subclasses?}
Despite many undecidability problems, PTAs were often used to model and verify various case studies (see \cref{section:tools}).
This can be seen as a paradox considering the numerous undecidability results PTAs suffer from.
In fact, as the aforementioned analyses terminate almost always with an exact result, it is challenging to understand why, and perhaps to exhibit further classes for which the problems considered in this survey become decidable.

\paragraph{Hybrid systems with parameters}

Some subclasses of linear hybrid automata are incomparable with timed automata (\eg{} \cite{BMRT04,AMPS12}), and parametric extensions could be studied.
Recall that the class of interrupt timed automata benefits from decidability results even when extended with parameters~\cite{BHPSS15}.

\paragraph{Synthesis}
Whereas decision problems (considered in this document) were much studied, little interest has been dedicated to the synthesis of parameters, which should, however, be a main practical challenge.
Despite undecidability (in general~\cite{AHV93}) or intractability (for L/U-PTAs~\cite{JLR15}), semi-algorithms or approximated procedures could be devised; SMT-based techniques~\cite{KP12}, or the integer hull approximation~\cite{JLR15,ALR15} can serve as a basis for future works.
Also note that two recent orthogonal works aimed at performing synthesis in a compositional manner~\cite{ABBCR16,ALin17}.

Also, combining nonparametric analysis (\eg{} with the efficient model checker \uppaal{}) with parametric analysis, so as to find perhaps not all valuations, but at least some of them, is certainly a promising direction of research.

\ea{different kinds of parameters: see ``Parametric analysis of computer systems'' (WH97) and the case study with Nx… for a protocol (BRP??)}

\section*{Acknowledgements}
This manuscript benefited from discussions with Didier Lime, Nicolas Markey, and Olivier H. Roux,
as well as from the useful comments and suggestions of all three anonymous reviewers.

	\newcommand{\CCIS}{Communications in Computer and Information Science}
	\newcommand{\ENTCS}{Electronic Notes in Theoretical Computer Science}
	\newcommand{\FAC}{Formal Aspects of Computing}
	\newcommand{\FI}{Fundamenta Informaticae}
	\newcommand{\FMSD}{Formal Methods in System Design}
	\newcommand{\IJFCS}{International Journal of Foundations of Computer Science}
	\newcommand{\IJSSE}{International Journal of Secure Software Engineering}
	\newcommand{\IPL}{Information Processing Letters}
	\newcommand{\JLAP}{Journal of Logic and Algebraic Programming}
	\newcommand{\JLAMP}{Journal of Logical and Algebraic Methods in Programming} %
	\newcommand{\JLC}{Journal of Logic and Computation}
	\newcommand{\LMCS}{Logical Methods in Computer Science}
	\newcommand{\LNCS}{Lecture Notes in Computer Science}
	\newcommand{\RESS}{Reliability Engineering \& System Safety}
	\newcommand{\STTT}{International Journal on Software Tools for Technology Transfer}
	\newcommand{\TCS}{Theoretical Computer Science}
	\newcommand{\ToPNoC}{Transactions on Petri Nets and Other Models of Concurrency}
	\newcommand{\TSE}{{IEEE} Transactions on Software Engineering}

	\renewcommand*{\bibfont}{\small}
	\printbibliography[title={References}]
\ifdefined\WithReply
	\clearpage
	\input{letter1.tex}
\fi

\end{document}